
\documentclass[11pt]{article}


\usepackage{jheppub}
\usepackage{listings}
\usepackage{amsfonts}
\usepackage{amsmath}
\usepackage{amssymb}
\usepackage{amsthm}
\usepackage[mathscr]{eucal}
\usepackage{bbold}
\usepackage{braket}
\usepackage{color}
\usepackage{dsfont}
\usepackage{framed}
\usepackage{mathtools}
\usepackage{physics}
\usepackage[normalem]{ulem}
\usepackage{tensor}
\usepackage{thmtools}
\usepackage{thm-restate}
\usepackage{ulem}
\usepackage{tikz}
\usetikzlibrary{arrows.meta,arrows} 
\usepackage{centernot}
\usepackage[shortlabels]{enumitem}
\usepackage[a]{esvect}  
\usepackage{slashed}
\usepackage{booktabs}
\usepackage{colortbl}

\usepackage[only,llbracket,rrbracket]{stmaryrd} 

\usepackage{yfonts}
\usepackage{float}
\usetikzlibrary{math} 
\usetikzlibrary{calc}		

\usepackage{cleveref}


\newtheorem{cor}{Corollary}
\newtheorem{lemma}{Lemma}
\newtheorem{thm}{Theorem}

\theoremstyle{definition}
\newtheorem{defi}{Definition}

\newtheorem{eg}{Example}
\newtheorem{remark}{Remark}

\newcommand{\R}{\mathbb{R}}

\newcommand{\N}{{\sf N}}
\newcommand{\D}{\sf{D}}


\newcommand{\mb}{\mathbf}
\newcommand{\leftm}{\mb{L}}
\newcommand{\rightm}{\mb{R}}
\newcommand{\leftocc}{\mb{L}}
\newcommand{\rightocc}{\mb{R}}
\newcommand{\ham}[2]{|#1 - #2|}
\newcommand{\pairset}{\mathcal{P}_\N}
\newcommand{\sbset}{\mathcal{P}_\N^{\rm sup}}
\newcommand{\cset}{\mathcal{P}_\N^{\rm cen}}
\newcommand{\bset}{\mathcal{P}_\N^{\rm bal}}
\newcommand{\mL}{m_{_{\text L}}}
\newcommand{\mR}{m_{_{\text R}}}

\usepackage{calligra}
\DeclareMathAlphabet{\mathcalligra}{T1}{calligra}{m}{n}
\DeclareFontShape{T1}{calligra}{m}{n}{<->s*[2.2]callig15}{}


\newcommand{\ent}[1]{\mathsf{S}(#1)} 

\newcommand{\be}{\begin{equation}}
\newcommand{\ee}{\end{equation}}

\newenvironment{customconj}[1]{\innercustomconj}
  {\endinnercustomconj}


\tikzstyle{startstop} = [rectangle, rounded corners, minimum width=3cm, minimum height=1cm,text centered, draw=black, fill=red!30]

\tikzstyle{process} = [rectangle, minimum width=3cm, minimum height=1cm, text centered, draw=black, fill=orange!30]
\tikzstyle{decision} = [diamond, minimum width=3cm, minimum height=1cm, text centered, draw=black, fill=green!30]

\definecolor{brg}{RGB}{70, 255,70}
\definecolor{darkgreen}{rgb}{0.1, 0.6, 0.2}

\preprint{BRX-TH-6729}

\title{\boldmath Combinatorial properties of holographic entropy inequalities}

\author[a]{Guglielmo Grimaldi,}
\author[a,b]{Matthew Headrick,} 
\author[c]{Veronika E.\ Hubeny,}
\author[d]{and Pavel Shteyner}

\affiliation[a]{Martin Fisher School of Physics, Brandeis University, Waltham MA 02453, USA}
\affiliation[b]{Institut des Hautes Etudes Scientifiques, 91440 Bures-sur-Yvette, France}
\affiliation[c]{Center for Quantum Mathematics and Physics (QMAP)\\ Department of Physics \& Astronomy, University of California, Davis CA 95616, USA}
\affiliation[d]{Department of Mathematics, Bar-Ilan University, Ramat-Gan, 5290002, Israel}
\emailAdd{ggrimaldi@brandeis.edu}
\emailAdd{headrick@brandeis.edu}
\emailAdd{veronika@physics.ucdavis.edu}
\emailAdd{pavel.shteyner@biu.ac.il}

\abstract{
A holographic entropy inequality (HEI) is a linear inequality obeyed by Ryu-Takayanagi holographic entanglement entropies, or equivalently by the minimum cut function on weighted graphs. We establish a new combinatorial framework for studying HEIs, and use it to prove several properties they share, including two majorization-related properties as well as a necessary and sufficient condition for an inequality to be an HEI. We thereby resolve all the conjectures presented in \cite{Grimaldi:2025jad}, proving two of them and disproving the other two. In particular, we show that the null reduction of any superbalanced HEI passes the majorization test defined in \cite{Grimaldi:2025jad}, thereby providing strong new evidence that all HEIs are obeyed in time-dependent holographic states.
\newline
\newline
A video abstract is available at \url{https://youtu.be/vUtqPSxG1Io}.
}

\begin{document}

\maketitle

\section{Introduction}

This paper presents a set of new results in combinatorial majorization theory that are motivated by a problem in graph theory, which in turn is motivated by questions in quantum gravity. We have endeavored to write the paper so that it is self-contained, i.e.\ readable without specialized knowledge in any of those fields. In subsec.\ \ref{ssec:broad-overview}, we provide a review of the big picture that attempts to bridge the terminology in separate communities. In subsec.\ \ref{ssec:summary}, we summarize the results of the paper \cite{Grimaldi:2025jad}, which motivated the current work, and in subsec.\ \ref{ssec:results} we summarize the content of this paper.

\subsection{Broad overview}\label{ssec:broad-overview}

The notion of entropy has been a cornerstone of classical information theory, quantum mechanics, and more recently, quantum gravity. In each of these areas, the entropy function shares many basic features. For example, dividing a system into $\N$ subsystems labelled $1,\ldots,\N$, the entropy is always a submodular set-function $F: 2^{\Omega} \to \mathbb{R}_+$, where $\mathbb{R}_+$ is the non-negative reals and $\Omega =[\N]:=\{1,2,3,\dots,\N\}$; i.e.\ for all $X,Y \in 2^\Omega$ we have
\begin{equation}\label{eq:submod}
    F(X) + F(Y) \geq F(X \cap Y) + F(X \cup Y)\,.
\end{equation}
This submodularity condition is also known in the physics literature as \emph{strong subadditivity} (SSA). Yet, entropy functions are less general than submodular functions: they obey additional properties. Tracing how these properties change from one context to another has been key in understanding the nature of information itself: from classical correlations to quantum entanglement and, surprisingly, to the emergence of spacetime.

\paragraph{Classical and quantum entropies:} In classical information theory, $\Omega$ indexes a collection of jointly distributed random variables. For example, if $\Omega = \{1,2\}$, we have two random variables $T_1$ and $T_2$ and a joint probability distribution $p(t_1,t_2)$ from which we can obtain its two marginals $p_1(t_1) = \sum_{t_2} p(t_1,t_2)$ and $p_2(t_2) = \sum_{t_1} p(t_1,t_2)$. For $X \in 2^{\Omega}$, the \emph{Shannon entropy} is then defined as
\begin{equation}
    H(X) := H(T_X) := - \sum_{t\in T_X}p_X(t)\log(p_X(t))\,,
\end{equation}
where $T_X := (T_k)_{k\in X}$ and $p_X$ its corresponding marginal (if $X = \Omega$, $p_X = p$). The Shannon entropy is monotone non-decreasing, i.e.\ $H(X\cup Y) \geq H(X)$ for $X,Y \in 2^\Omega$. Furthermore, it obeys additional non-trivial constraints (other than submodularity); see \cite{zhang:1998, 1023507,4557201}. 

Crucially, the monotonicity property fails in the quantum setting where $\Omega$ indexes a collection of reduced density operators. More specifically, given a density operator $\rho$ acting on a Hilbert space $\mathcal{H} =\mathcal{H}_1 \otimes \mathcal{H}_2 \otimes \cdots \otimes \mathcal{H}_{\N}$, the \emph{von Neumann entropy} $\mathsf{S}$ is defined analogously to its classical counterpart:
\begin{equation}
    \ent{X} := \ent{\rho_X}:= -\Tr \rho_X \log \rho_X\,,
\end{equation}
where $\rho_X := \Tr_{\mathcal{H}_{[\N]\setminus X}}\rho$ and 
$\mathcal{H}_{[\N]\setminus X} = \bigotimes_{i\notin X} \mathcal{H}_i$.
As mentioned, the von Neumann entropy is not monotone non-decreasing; instead, it can be shown to be \emph{weakly monotonic}:
\begin{equation}\label{eq:wmo}
    \ent{X} + \ent{Y} \geq \ent{X \setminus Y} + \ent{Y \setminus X}\,.
\end{equation}
The fact that entropies of density operators can violate monotonicity is a manifestation of entanglement, a type of quantum correlation that is inexplicable by classical physics. An open question in quantum information theory is whether the von Neumann entropy obeys further constraints, particularly in the form of linear inequalities. For $\mathsf{N} = 3$, it is known that \eqref{eq:submod} and \eqref{eq:wmo} are the only linear constraints on $\mathsf{S}$, but the question remains open for $\mathsf{N} \geq 4$.\footnote{For $\mathsf{N} \geq 4$, only \emph{constrained} inequalities are known \cite{Linden:2004ebt, Cadney:2011vix, Christandl:2023zry}. These are inequalities that hold only for a class of density operators that saturate other constraints.}

\paragraph{Entropies in quantum gravity:} The story becomes even richer when jumping into the realm of quantum gravity. This is the setting of this paper; however, we stress that no knowledge of quantum gravity is required for the results shown here. The essential part is that we are studying the von Neumann entropy for a special class of density operators that are of interest in quantum gravity; specifically, static states in the AdS/CFT correspondence. Stripping away many physical details that are unimportant for our purposes, we can summarize the situation as follows. The state admits a representation in terms of a Riemannian manifold with boundary $\Sigma$, with the subsystems $1,\ldots,\N$ represented by non-overlapping regions $A_1,\dots,A_\N$ of the boundary $\partial\Sigma$. For any set $X\subset[\N]$ of regions, the entropy $\ent{X}$ is computed by the area of the minimal hypersurface in $\Sigma$ homologous to $A_X:=\cup_{i\in X}A_i$:
\be\label{eq:RT}
\ent{X} = \frac1{4G_{\rm N}}\min_{\gamma\sim A_X}|\gamma|\,,
\ee
where $G_{\rm N}$ is a positive constant (the Newton constant, or equivalently Planck area), $\gamma\sim A_X$ means that the hypersurface $\gamma$ in $\Sigma$ is homologous to $A_X$ (in other words, there exists a region of $\Sigma$ bounded by $A_X\cup\gamma$), and $|\cdot|$ denotes the area. Eq.\ \eqref{eq:RT} is the Ryu-Takayanagi (RT) formula \cite{Ryu:06b20v,Ryu:2006ef}.

It was shown in \cite{Bao:2015bfa} that the application of the RT formula for a given $\Sigma$ can be reduced to the problem of computing min cuts on a certain weighted undirected graph $G$ with $\N+1$ terminals (or ``external vertices''). Each boundary region $A_i$ corresponds to the terminal labelled $i$, and the complementary region $\partial\Sigma\setminus\cup_{i\in[\N]}A_i$ corresponds to the terminal labelled $\N+1$. Then $\ent{X}$ equals the min cut on $G$ between the terminals with labels in $X$ and the rest. Conversely, for every weighted undirected graph with $\N+1$ terminals, there exists a Riemannian manifold with boundary $\Sigma$ and boundary regions $A_i\subset\partial\Sigma$ ($i\in[\N]$) such that the corresponding graph is $G$. Hence, the allowed set of entropies in such quantum gravity states equals the set of min-cut functions on weighted graphs.

The min-cut function is known to be a submodular function \cite{Headrick:2007km}, as expected on physical grounds, since it is a von Neumann entropy and hence must obey all the same properties. However, it further obeys a rich class of linear constraints that take the name of \emph{holographic entropy inequalities} (HEIs) \cite{Hayden:2011ag, Bao:2015bfa}. Understanding structural properties of such inequalities has been the focus of much work in recent years since it can provide hints on the nature of correlations in quantum gravity as well as the emergence of spacetime itself \cite{VanRaamsdonk:2010pw}. In this paper we prove several new theorems in this direction, motivated by the results of the paper \cite{Grimaldi:2025jad}.

\subsection{Summary of \cite{Grimaldi:2025jad}}
\label{ssec:summary}

In \cite{Grimaldi:2025jad}, the authors studied a particular class of HEIs that were named \emph{null reductions}. We first review what null reductions are, then motivate the physics behind them. Consider, as an example, the following HEI:
\begin{equation}\label{eq:mmi1}
    \ent{12} + \ent{23} + \ent{13} \geq \ent{1} + \ent{2} + \ent{3} + \ent{123}.
\end{equation}
This is known as \emph{monogamy of mutual information} (MMI), and was proven to be an HEI in \cite{Hayden:2011ag,Headrick:2013zda}. Now consider the following operation: remove all terms from the inequality that do not contain the index ``1":
\begin{equation}\label{eq:mmi-nr}
    \ent{12} + \ent{13} \geq \ent{1} + \ent{123}.
\end{equation}
This resulting inequality is called the null reduction of \eqref{eq:mmi1} on 1. Of course, the choice to null reduce on index 1 was arbitrary; one can reduce on any index. As a first observation, note that the resulting inequality is submodularity/SSA, and thus it is itself an HEI. This observation, along with its realization in thousands of cases obtained by null reducing other HEIs, motivated the following conjectures (the numbering matches that in \cite{Grimaldi:2025jad}):\footnote{More precisely, the conjectures of \cite{Grimaldi:2025jad} refer to \emph{superbalanced} inequalities, reviewed in  subsec.~\ref{sec:balance} (cf.~\cref{def:superbalance}), which covers all cases of interest.}
\begin{customconj}{3}\label{conj:3}
  \emph{If an inequality is an HEI, then all of its null reductions are HEIs.}  
\end{customconj}
\begin{customconj}{4}\label{conj:4}
\emph{If all the null reductions of an inequality are HEIs, then the inequality is an HEI.} 
\end{customconj}

A second observation made in \cite{Grimaldi:2025jad} was the following. From the null reduced inequality \eqref{eq:mmi-nr}, construct the following vectors ${x}$ and ${y}$:
\begin{equation}\label{eq:nr-vectors}
    x =         (\lambda_1 + \lambda_2,
        \lambda_1 + \lambda_3)\,,
\qquad 
    y = (       \lambda_1,
        \lambda_1 + \lambda_2 +  \lambda_3)\,,
\end{equation}
where $\lambda_1, \lambda_2, \lambda_3$ are arbitrary positive real variables. More generally, given a null-reduced inequality, construct a component of the vector $x$ by replacing a term $\ent{X}$ on the LHS with $\sum_{i \in X}\lambda_i$, and similarly for the vector $y$ but using the RHS terms. Then, note that for the vectors \eqref{eq:nr-vectors}, $x$ is \emph{majorized} by $y$ (or $x \preceq y$) for all positive $\lambda_1, \lambda_2, \lambda_3$. For general ${x}, {y} \in \mathbb{R}^d$, $x \preceq y$ if
\begin{equation}\label{eq:majdef}
 \sum_{n = 1}^{k} x_n^{\downarrow}\le \sum_{n = 1}^{k} y_n^{\downarrow}
  \quad( k = 1,2,\dots, d-1)\,,\qquad
  \sum_{n = 1}^{d} x_n=\sum_{n = 1}^{d} y_n \,,
\end{equation}
where $x_n^\downarrow$, $y_n^\downarrow$ are the $n$th largest components of $ x$, $y$ respectively. We refer the reader to Appendix A of \cite{Grimaldi:2025jad} for a brief review of majorization. Note also that this majorization property holds regardless of which index we null-reduce \eqref{eq:mmi1} on. This observation, together with further numerical investigations, motivated the following conjectures:
\begin{customconj}{1}\label{conj:1}
\emph{If an inequality is an HEI, then all of its null reductions obey the majorization property.}
\end{customconj}

\begin{customconj}{2}\label{conj:2}
\emph{If all the null reductions of an inequality obey the majorization property, the inequality is an HEI.} 
\end{customconj}

Conjecture 1 is related to the physical question of whether time-dependent states in quantum gravity obey the same constraints as static ones, which was the original motivation for the paper \cite{Grimaldi:2025jad}. Specifically, the null reduction of an HEI is obtained by considering a certain configuration of boundary regions residing on a common light cone whose vertex lies in the region being null reduced on. By design, for such a configuration the HEI is saturated in the vacuum state. One can then show that HEI is then safe against small perturbations of the state if and only if it obeys the majorization property.

\subsection{Results}
\label{ssec:results}

This paper's objective is two-fold. Firstly, we will resolve all the above conjectures: namely, we will provide a proof for conjectures \ref{conj:1} and \ref{conj:3} and we will disprove conjectures \ref{conj:2} and \ref{conj:4} with direct counterexamples. As argued in \cite{Grimaldi:2025jad}, the proof of conjecture \ref{conj:1} provides strong new evidence for the statement that time-dependent and static states obey the same constraints.

Secondly, by borrowing tools from majorization theory, we will refine the setup and provide multiple new results. In particular, we will define several new combinatorial properties of inequalities, which we label as various forms of ``dominance'' and that are in some sense generalizations of the notions of balance and superbalance. We then establish the full set of logical relations connecting these notions with each other and with null reductions, the above majorization property, and contraction maps (used to prove that an inequality is an HEI). We provide not only proofs for all the logical implications but also, for each one-way implication, a counterexample for its converse. The map of these relations may be found in fig.\ \ref{fig:logic-map}. A key result, in addition to the resolutions of the above conjectures, is a necessary and sufficient combinatorial criterion for a null-reduced inequality to be an holographic entropy inequality.

The paper is structured as follows. In sec.\ \ref{sec:defs}, we continue our review of HEIs in more detail by providing basic definitions and theorems, and introducing the very useful language of $(0,1)$-matrices that will be employed in the rest of the paper. In sec.\ \ref{sec:dominance}, we introduce a new set of combinatorial properties of HEIs that are the basis for the main results of this paper, and in sec.\ \ref{sec:theorems} we provide proofs of all their logical relations. In sec.\ \ref{sec:discussion}, we provide an intuitive summary of the combinatorial properties and results discussed in the paper and present some interesting open questions. Finally, in appendix \ref{sec:contraction}, we discuss certain issues concerning the definition of contraction maps.

\paragraph{Note added:} See also paper \cite{BartekWIP} for a discussion of similar issues, which will appear as a synchronized submission.

\section{Definitions}\label{sec:defs}
The goal of this section is to give a self-contained and brief overview of the setup, main definitions, and tools needed to understand the results of this paper.

\subsection{Notation \& conventions}

A matrix will be denoted by a bold capital letter like $\mathbf{A}$, its $j$'th column by $\mathbf{A}^{(j)}$ and its $i$'th row by $\mathbf{A}_{(i)}$. Given an $r\times c$ matrix $\mathbf{A}$ and subset $U \subseteq[c]:= \{1,2,\dots,c\}$, $\mathbf{A}^{(U)}$ will denote the submatrix induced by the columns with indices in $U$. A vector without a transpose sign ($v$) is a row vector, and with a transpose sign ($v^T$) is a column vector; $e$ denotes the row vector of all 1s, of the appropriate length given the context. Given vectors $v,w \in \mathbb{R}^m$ by $v \leq w$ we mean $v_i \leq w_i$ for all $i \in [m]$.

A bit-string is a vector with components in $\{0,1\}$. Given bit-strings $x,x'$ of the same length, their Hamming distance is the number of components where $x$ and $x'$ differ, which equals
\begin{equation}\label{eq:Haming-dot}
    |x-x'| =  |x| + |x'| - 2\,x\cdot x',
\end{equation}
where $|\cdot |$ is the $\ell^1$ norm, 
$|v| = \sum_{i } |v_i|$,
and $\cdot$ is the standard dot product. Also useful is the bitwise AND operator $\wedge$: $x \wedge x'$ is the bit-string with $1$ in every position where both $x$ and $x'$ are 1, and 0 otherwise; we will sometime call  $x \wedge x'$  the \emph{overlap} of $x$ and $x'$. One has $|x \wedge x' | = x\cdot x'$. Also, $x' \leq x$ is equivalent to the statement that the set of positions with 1s in $x'$ is a subset of the set of positions with 1s in $x$. For a set $U$, we will also use $|U|$ to denote the number of elements.

Vector majorization was defined in \eqref{eq:majdef}. For matrices, we will make use of the following notions.
\begin{defi}[PCM]\label{def:PCM}
    For $r\times c$ matrices $\mathbf{A}, \mathbf{B}$ we say that $\mathbf{A}$ is {\em positive-combinations majorized} by $\mathbf{B}$, written $\mathbf{A}\preceq^{\text{PC}} \mathbf{B}$, if $v\mathbf{A}\preceq v\mathbf{B}$  for all $v \in \mathbb{R}^r_{+}$.
\end{defi}

\noindent Other names for this inequality in the literature include {\em positive directional
majorization} and {\em price majorization}, see \cite[Definition IV.15.A.13]{MR2759813} and references therein.
If in the definition above we instead require $v$ to be any real vector, the corresponding notion is called {\em linear-combinations majorization} or {\em directional majorization}. This relation is strictly stronger, see \cite[Example IV.15.A.14]{MR2759813}. We can also go the other way and further reduce the combinations set.
\begin{defi}[BCM]\label{def:BCM}
    For $r\times c$ matrices $\mathbf{A}, \mathbf{B}$ we say that $\mathbf{A}$ is {\em binary-combinations majorized} by $\mathbf{B}$, written $\mathbf{A}\preceq^{\rm BC} \mathbf{B}$, if $v \mathbf{A} \preceq v\mathbf{B}$  for all $v \in \{0,1\}^r$.
\end{defi}

\subsection{Min-cut function \& entropy vector}\label{ssec:min-cuts}

Let $G$ be a weighted undirected graph with vertex set $\mathscr{V}$, partitioned into a set $\mathscr{X}$ of \emph{external} vertices and a set $\mathscr{I}$ of \emph{internal} vertices, edge set $\mathscr{E}$, and weights $w: \mathscr{E} \to \mathbb{R}^+$. Because of the quantum-information interpretation of these graphs, we will sometimes refer to external vertices as \emph{parties}.\footnote{It can be useful in certain settings, for example when considering the operations of merging and splitting parties, to define a ``party'' so that it can consist of multiple external vertices. In fact, this is crucial for e.g.\ representing certain important structures (such as the entropy cone extreme rays, mentioned below) by  tree graph models \cite{Hernandez-Cuenca:2022pst,Hubeny:2025bjo}. In this paper, we won't need this flexibility, so we will simply identify individual parties with individual external vertices.}

A \emph{cut} $V$ is a subset of $\mathscr{V}$, and the \emph{cut of edges} $\mathscr{C}(V)$ is the set of edges with one end in $V$ and one end in $V^c = \mathscr{V}\setminus V$, with total weight $|\mathscr{C}(V)|$ the sum of their weights
\begin{equation}
|\mathscr{C}(V)| = \sum_{e \in \mathscr{C}(V)} w(e).
\end{equation}
We define the \emph{min-cut function} (or \emph{holographic entanglement entropy}) $\mathsf{S}$ as the following function  on the set of subsets $X$ of $\mathscr{X}$:
\begin{equation}
    \ent{X} := \min_{V \sim X} |\mathscr{C}(V)|\,,
\end{equation}
where $V \sim X$ if $V  \cap \mathscr{X} = X$.

\begin{remark}\label{rem:min-cut-complement}
Since for any cut $\mathscr{C}(V)$ we have $|\mathscr{C}(V)| = |\mathscr{C}(\mathscr{V}\setminus V)|$, it follows that $\ent{X} = \ent{\mathscr{X}\setminus X}$. We can eliminate this redundancy by fixing a party $p\in\mathscr{X}$ called the \emph{purifier}, and considering only subsets $X\subseteq\mathscr{X}\setminus\{p\}$. Furthermore, since $\ent{\mathscr{X}} = 0$, one also has $\ent{\emptyset} = 0$, and one can simply restrict the function $\mathsf{S}$ to non-empty subsets of $\mathscr{X}\setminus\{p\}$.
\end{remark}

In the physics literature, the external vertices, or parties, are usually labelled by letters $A,B,\ldots$; here, we will instead adopt the graph-theory convention and label them by numbers $1,\ldots,\N+1$. Furthermore, unless explicitly indicated otherwise, we will choose the party labelled $\N+1$ to be the purifier $p$. With this convention, the min-cut function $\mathsf{S}$ is a function on $2^{[\N]}\setminus\{\emptyset\}$. We can collect its values into a vector,
\begin{equation}
    \vec{\mathsf{S}} \coloneq
    (\ent{1},\, \ent{2},\, \ent{3},\, \dots,\, \ent{12},\, \ent{13},\, \dots,\, \ent{12\cdots \N})
\end{equation}
which lives in $\mathbb{R}^{2^\N -1}$. In the above, juxtaposition of integers corresponds to the set of such integers (e.g. $\ent{12} := \ent{\{1,2\}}$).
In the physics literature $\vec{\mathsf{S}}$ is known as an \emph{entropy vector}, and the space $\mathbb{R}^{2^\N -1}$ it lives in as \emph{entropy space}.

\subsection{Holographic entropy inequalities \& the min-cut cone}
\label{ssec:HEC}

As we change the graph, including its weights, with $\N$ fixed, the entropy vector $\vec{\mathsf{S}}$ moves around in $\R^{2^\N-1}$. We are interested in the set $H_\N\subset \mathbb{R}^{2^\N -1}$ of entropy vectors for all weighted graphs with $\N+1$ external vertices. Multiplying all the weights of a given graph by a non-negative constant $c$ multiplies the corresponding entropy vector by $c$, $\vec{\mathsf{S}}\mapsto c\vec{\mathsf{S}}$. Furthermore, superposing two graphs (taking the union of the edge sets and internal vertex sets) adds their entropy vectors, $\vec{\mathsf{S}}=\vec{\mathsf{S}}_1+\vec{\mathsf{S}}_2$. It follows that $H_\N$ is a convex cone, called the \emph{min-cut cone} (in physics, the \emph{holographic entropy cone}, HEC). 

It can be shown that $H_\N$ is in fact a rational polyhedral cone, in other words it is defined by a finite number of inequalities with integer coefficients \cite{Bao:2015bfa, Avis:2021xnz}. Therefore, they can be written in the following form:
\begin{equation}\label{eq:ineq-form}
    \sum_{n = 1}^{\mL} \ent{X_n} \geq \sum_{n = 1}^{\mR} \ent{Y_n}\,,
\end{equation}
where the $X_n$ and $Y_n$ are non-empty subsets of $[\N]$. The same subset may appear multiple times on either the left-hand side or right-hand side of the inequality. We could equivalently require the subsets to be distinct and include positive integer coefficients in the inequality; however, the convention of having unit coefficients and repeated subsets will be more convenient for our analysis. 

We will call an inequality of the form \eqref{eq:ineq-form} that holds for all vectors in $H_\N$ a \emph{holographic entropy inequality} (HEI).\footnote{In the paper \cite{Grimaldi:2025jad}, these inequalities were referred to as \emph{static holographic entropy inequalities} (sHEIs), to emphasize that by definition they hold for static holographic states. In this paper, we revert to the conventional term HEI. Note that in some of the earlier literature (such as 
\cite{Hubeny:2018trv,Hubeny:2018ijt,Hernandez-Cuenca:2022pst,Hernandez-Cuenca:2023iqh,Czech:2023xed,He:2024xzq}), 
the term HEI was reserved for the finite set of so-called \emph{primitive} inequalities (terminology explained in the next paragraph); here instead we use it to refer to the infinite set of all inequalities which arise from positive combinations of the primitive ones.} Note that we always write HEIs in the form $L\ge R$, with quantities referring to the left-hand side labelled by L and those referring to the right-hand side labelled by R.

An HEI that cannot be written as a positive integer linear combination of other HEIs is called a \emph{primitive} HEI. There are a finite number of them for each $\N$, and their saturation defines the facets of $H_\N$. This facet-based  description of $H_\N$ is known as the $H$-representation in polyhedral geometry, and it completely defines the cone. An equivalent complete description of $H_\N$ is provided by the set of its \emph{extreme rays}, known as the $V$-representation. The set of extreme rays generates $H_\N$ in the sense that their convex hull is $H_\N$. The extremal structure (i.e.\ either the $H$- or $V$- representation) of $H_\N$ is known fully up to $\N = 5$ \cite{HernandezCuenca:2019wgh}. For $\N = 6$ it is partially known \cite{Hernandez-Cuenca:2023iqh}. In addition, two infinite families of primitive HEIs for arbitrarily large $\N$ are known \cite{Czech:2024rco}.

Relabelling the $\N$ non-purifier external vertices of a graph induces a linear action on the entropy space. Such relabellings define a representation of the symmetric group $S_\N$, under which $H_\N$ is necessarily invariant. In fact, it is invariant under the action of the larger permutation group $S_{\N+1}$ consisting of relabellings of all $\N+1$ external vertices, followed by replacing any subset that now includes the purifier by its complement. For example, if $\N=2$, under the permutation that exchanges 2 and 3, $\ent{2}$ becomes $\ent{3}=\ent{12}$, while $\ent{12}$ becomes $\ent{13}=\ent{2}.$

The following examples indicate some salient features of $H_\N$:
\begin{enumerate}
\item $H_1$ is specified solely by the non-negativity of $\mathsf{S}$:
\begin{equation}
    \ent{1} \geq 0\,.
\end{equation}
\item $H_2$ is a simplicial cone with 3 facets. The inequalities defining its facets are \emph{subadditivity} (SA),
\begin{equation}\label{eq:SA}
    \ent{1} + \ent{2} \geq \ent{12}\,,
\end{equation}
and its images under the $S_3$ permutation group,
\be
\ent{1}+\ent{12}\ge \ent{2}\,,\qquad
\ent{2}+\ent{12}\ge \ent{1}\,.
\ee
We call the set of all inequalities obtained by permutations from a given one its \emph{orbit}, and it is conventional to only list one representative per orbit when discussing the extremal structure of $H_\N$. So, for $H_2$, the orbit of \eqref{eq:SA} has length 3. It lives in 3-dimensional space and has 3 extreme rays (which correspond to entropy vectors of Bell pairs).

\item For $\N = 3$ and $\N=4$, the cone is defined by two inequalities. These are SA and \emph{monogamy of mutual information} (MMI):
\begin{equation}\label{eq:MMI}
    \ent{12} + \ent{23} + \ent{13} \geq \ent{1} + \ent{2} + \ent{3} + \ent{123}\,,
\end{equation}
plus their permutations. For $\N=3$, orbits have lengths ${4\choose 2} =6$ and 1 respectively, so the cone is still simplicial.  For $\N=4$, this is no longer the case, since the orbit lengths are 10 and 10, leading to 20 facets (and 20 extreme rays), while it lives in 15-dimensional space. For $\N\le4$, all extreme rays are realized by $K_2$ (the graph with two vertices connected by one edge) and star graphs (graphs with a single internal vertex).

\item $H_5$ has 8 orbits, with lengths ranging from 10 to 90, for a total of 372 facets (and 2267 extreme rays organized into 19 orbits). Here the graphs realizing the given extreme rays are no 
longer realizable by star graphs; see fig.\ 1 in \cite{HernandezCuenca:2019wgh} or fig.\ 12 in \cite{Hernandez-Cuenca:2022pst}.
\item For $\N=6$, hundreds of primitive HEI orbits have been found \cite{Hernandez-Cuenca:2023iqh}; however, the full number is unknown and likely much larger.
\item Two infinite families of primitive HEIs, for arbtrarily large $\N$, have also been found \cite{Czech:2024rco}.
\end{enumerate}

One can also study the dual cone,
\begin{equation}
    H^*_{\N} := \{ \vec{\mathsf{Q}}\in\mathbb{R}^{2^\N-1}:\forall \,\vec{\mathsf{S}} \in H_\N\,,\, \vec{\mathsf{Q}}\cdot \vec{\mathsf{S}} \geq 0 \}\,.
\end{equation}
The extreme rays and facets of $H_{\N}$ get mapped to the facets and extreme rays of $H^*_{\N}$ respectively. Any HEI can be written in the form $\vec{\mathsf{Q}}\cdot \vec{\mathsf{S}}\ge0$, where $\vec{\mathsf{Q}}$ is an integral point of $H_\N^*$. Given a vector $\vec{\mathsf{Q}}$, the quantity $\mathsf{Q}=\vec{\mathsf{Q}}\cdot \vec{\mathsf{S}}$ is called an \emph{information quantity} (IQ); if $\vec{\mathsf{Q}}\in H^*_\N$ then it is a \emph{holographic information quantity} (HIQ).

So far, we have been setting the purifier $p$ to be $\N+1$. It will give us a little more flexibility to allow an arbitrary choice of purifier $p\in[\N+1]$. We would then consider inequalities of the form \eqref{eq:ineq-form} where $X_n,Y_n$ are subsets of $[\N+1]\setminus\{p\}$. Of course, this is not really a generalization, since an inequality with $p\neq\N+1$ can be equivalently rewritten as one with $p=\N+1$ by replacing every $X_n,Y_n$ that contains $\N+1$ by its complement. We call an inequality with $p\neq\N+1$ an HEI if the one obtained in this way is an HEI.

\subsection{Matrix notation \& contraction maps}
\label{ssec:matrix}

Given an inequality of the form \eqref{eq:ineq-form} (with arbitrary purifier $p$), we can encode it into a pair of $(0,1)$ matrices $(\leftm,\rightm)$ with dimensions $(\N+1) \times \mL$ and $(\N+1) \times \mR$ respectively, with components defined as 
\begin{equation}
    L_{in} := \begin{cases}
    1 & \text{if } i \in X_n \\
    0 & \text{if } i \notin X_n
    \end{cases}, \qquad \quad R_{in} := \begin{cases}
    1 & \text{if } i \in Y_n \\
    0 & \text{if } i \notin Y_n
    \end{cases}\,.
\end{equation}
In other words, each column of $\leftm$ or $\rightm$ corresponds to one term on the LHS or RHS of \eqref{eq:ineq-form}, while each row corresponds to one external vertex.  Note that the matrices $\leftm$, $\rightm$ depend on the order in which the terms on the LHS and RHS respectively are written; changing the order permutes the columns. We will see that the various properties of matrix pairs that we will study are not affected by such permutations.

The rows $\leftocc_{(i)}$ and $\rightocc_{(i)}$ of the two matrices are known as \emph{occurrence vectors}. The occurrence vectors for the purifier $p$ are both necessarily 0. We let $\pairset$ be the set of pairs $(\leftm,\rightm)$ of $(0,1)$ matrices with $\N+1$ rows containing at least one common row of all 0s. We will generally treat an inequality and its matrix pair as interchangeable; for example, we will say that $(\leftm,\rightm)\in\pairset$ is an HEI if the corresponding inequality is an HEI.

\begin{eg}
For the MMI inequality \eqref{eq:MMI}, the $\leftm$ and $\rightm$ matrices are
\begin{align}
    &\,\,\,\,\,\ent{12} + \ent{23} + \ent{13} \quad\geq\quad \ent{1} + \ent{2} + \ent{3} + \ent{123}\\
    \leftm=&\begin{pmatrix}
    \quad 1 &\qquad\,\, 0 &\qquad\,\,\, 1\quad \\
    \quad 1 &\qquad\,\, 1 &\qquad\,\,\, 0\quad \\
    \quad 0 &\qquad\,\, 1 &\qquad\,\,\, 1\quad \\
    \quad 0 &\qquad\,\, 0 &\qquad\,\,\, 0\quad
    \end{pmatrix}\,\,\,,\,\,\,
    \begin{pmatrix}
   \quad 1 &\qquad 0 &\qquad 0 &\,\,\qquad 1\quad\\
   \quad 0 &\qquad 1 &\qquad 0 &\,\,\qquad 1\quad\\
   \quad 0 &\qquad 0 &\qquad 1 &\,\,\qquad 1\quad\\
   \quad 0 &\qquad 0 &\qquad 0 &\,\,\qquad 0\quad
    \end{pmatrix} = \rightm
\end{align}
\end{eg}

An essential tool in the study of HEIs is the contraction map.\footnote{Contraction maps were originally defined in terms of a weighted Hamming distance \cite{Bao:2015bfa}. However, this definition turns out to be too restrictive, as it is not satisfied for certain HEIs \cite{Avis:2021xnz}. 
The various types of contraction maps are discussed in detail in appendix \ref{sec:contraction}.}

\begin{defi}[Contraction map]
\label{def:contr_map} A \emph{contraction map} for a pair $(\leftm,\rightm) \in \pairset$ is a map $f: \{0,1\}^{\mL} \to \{0,1\}^{\mR}$ with the following properties:
\begin{enumerate}
    \item For all $i \in [\N + 1]$,  $f(\leftocc_{(i)}) = \rightocc_{(i)}$.
    \item For all $x,x'\in\{0,1\}^{\mL}$, $\ham{x}{x'} \geq  \ham{f(x)}{f(x')}$.
\end{enumerate}
\end{defi} 

It was shown in \cite{Bao:2015bfa} that if a contraction map exists then the corresponding inequality is an HEI, and an argument was recently given for the converse \cite{Bao:2025sjn}. Throughout this paper, we will assume that this is the case, and we will take the existence of a contraction map to be synonymous with an inequality being an HEI.

\begin{eg}
The \emph{strong subadditivity} (SSA) inequality,
\begin{equation}\label{eq:SSA}
    \ent{12} + \ent{23} \geq \ent{2} + \ent{123}\,,
\end{equation}
corresponds to the following matrix pair:
\begin{equation}\label{eq:SSA-matrix}
    \leftm =
    \begin{pmatrix}
    1 & 0 \\
    1 & 1 \\
    0 & 1 \\
    0 & 0
    \end{pmatrix},\quad
    \rightm =
    \begin{pmatrix}
    0 & 1 \\
    1 & 1 \\
    0 & 1 \\
    0 & 0
    \end{pmatrix}.
\end{equation}
For this inequality, a contraction map $f$ is uniquely determined by the constraint $f(\leftocc_{(i)}) = \rightocc_{(i)}$:
\begin{equation}\label{eq:SSA-contraction}
    f(10) = 01\,,\quad f(11) = 11\,,\quad f(01) = 01\,,\quad f(00) = 00\,.
\end{equation}
As can be explicitly checked, this map also obeys the contraction property. SSA is therefore guaranteed to be an HEI.
\end{eg}

\subsection{Transformations of matrix pairs \& inequalities}
\label{sec:transformations}

The following transformations on the pair $(\leftm,\rightm)\in\pairset$, and the corresponding inequality, preserve the existence of a contraction map, as well as the PCM and BCM properties.

\begin{enumerate}
\item{\textbf{Column permutations:}} Independently permuting the columns of $\leftm$ and of $\rightm$. This transformation corresponds to rearranging the terms on the two sides of the inequality. 

\item{\textbf{Row permutations:}} Applying the same permutation to the rows of $\leftm$ and $\rightm$. This transformation corresponds to relabelling the parties in the inequality. Consider the SSA inequality \eqref{eq:SSA-matrix} as an example; applying a $1\leftrightarrow 2$ row permutation on both $\leftm$ and $\rightm$ we get 
\begin{equation}
\leftm =
    \begin{pmatrix}
    1 & 0 \\
    1 & 1 \\
    0 & 1 \\
    0 & 0
    \end{pmatrix},\quad
    \rightm =
    \begin{pmatrix}
    0 & 1 \\
    1 & 1 \\
    0 & 1 \\
    0 & 0
    \end{pmatrix}\qquad \mapsto \qquad \leftm' =
    \begin{pmatrix}
    1 & 1 \\
    1 & 0 \\
    0 & 1 \\
    0 & 0
    \end{pmatrix},\quad
    \rightm' =
    \begin{pmatrix}
    1 & 1 \\
    0 & 1 \\
    0 & 1 \\
    0 & 0
    \end{pmatrix},
\end{equation}
which corresponds to transforming the original inequality \eqref{eq:SSA} to its permutation,
\begin{equation}
    \ent{12} + \ent{13} \geq \ent{1} + \ent{123}\,.
\end{equation}
The contraction map is clearly left invariant. 
\item \textbf{Choosing a different purifier:} Given $i\in[\N+1]$, replace
\be\label{newpurifier}
\leftm\mapsto\leftm'= \leftm +_2 e^T
\leftm_{(i)} \,,\qquad
\rightm\mapsto\rightm'=\rightm +_2e^T
\rightm_{(i)}\,,
\ee
where $+_2$ means addition mod 2. In other words, do a bit-flip on each column in $\leftm$ ($\rightm$) that has a 1 in the row $\leftm_{(i)}$ ($\rightm_{(i)}$). This transformation corresponds to making the party labelled $i$ the purifier. Again for \eqref{eq:SSA-matrix}, making party 1 the purifier amounts to the following transformation
\begin{equation}\label{eq:SSA-to-WM}
\leftm =
    \begin{pmatrix}
    1 & 0 \\
    1 & 1 \\
    0 & 1 \\
    0 & 0
    \end{pmatrix},\quad
    \rightm =
    \begin{pmatrix}
    0 & 1 \\
    1 & 1 \\
    0 & 1 \\
    0 & 0
    \end{pmatrix} \qquad \mapsto \qquad \leftm' =  \begin{pmatrix}
    0 & 0 \\
    0 & 1 \\
    1 & 1 \\
    1 & 0
    \end{pmatrix},\quad
    \rightm' =
    \begin{pmatrix}
    0 & 0 \\
    1 & 0 \\
    0 & 0 \\
    0 & 1
    \end{pmatrix}.
\end{equation}
Indeed the resulting inequality $(\leftm',\rightm')$ could have been obtained starting from \eqref{eq:SSA}, then using the property $\ent{X} = \ent{[4] \setminus X}$ on terms $X \ni 1$, to obtain the final transformed inequality
\begin{equation}\label{eq:WM}
    \ent{34} + \ent{23} \geq \ent{2} + \ent{4}\,.
\end{equation}
(This HEI is an instance of weak monotonicity \eqref{eq:wmo}.)

\item \textbf{Duplicating rows:} Duplicating the same row on both $\leftm$ and $\rightm$ (or recombining duplicated rows on $\leftm$ and $\rightm$). This transformation corresponds to splitting a party into two parties (or the inverse operation). Of course, this changes the value of $\N$. Again, using \eqref{eq:SSA-matrix} as an example, by duplicating the first row,
\begin{equation}
\leftm =
    \begin{pmatrix}
    1 & 0 \\
    1 & 1 \\
    0 & 1 \\
    0 & 0
    \end{pmatrix},\quad
    \rightm =
    \begin{pmatrix}
    0 & 1 \\
    1 & 1 \\
    0 & 1 \\
    0 & 0
    \end{pmatrix} \qquad \mapsto \qquad \leftm' =
    \begin{pmatrix}
    1 & 0 \\
    1 & 0 \\
    1 & 1 \\
    0 & 1 \\
    0 & 0
    \end{pmatrix},\quad
    \rightm' =
    \begin{pmatrix}
    0 & 1 \\
    0 & 1 \\
    1 & 1 \\
    0 & 1 \\
    0 & 0
    \end{pmatrix},
\end{equation}
we have duplicated party 1 into parties 1 and 2, thereby embedding the inequality in the larger space for $\N = 4$, and obtaining the following instance of SSA with four parties: 
\begin{equation}
    \ent{123} + \ent{34} \geq \ent{3} + \ent{1234}\,.
\end{equation}
\item \textbf{Appending columns:}
Appending the same column to both $\leftm$ and $\rightm$ (or deleting redundant columns). The column must have a 0 in the position of the purifier. This transformation corresponds to adding the same term on the LHS and RHS of the inequality (or canceling equal terms from both sides). Adding the term $\ent{3}$ to \eqref{eq:SSA} on both sides
\begin{equation}
    \ent{12} + \ent{23} + \ent{3} \geq \ent{2} + \ent{123} + \ent{3}
\end{equation}
corresponds to the following transformation on \eqref{eq:SSA-matrix}:
\begin{equation}
    \leftm =
    \begin{pmatrix}
    1 & 0 \\
    1 & 1 \\
    0 & 1 \\
    0 & 0
    \end{pmatrix},\quad
    \rightm =
    \begin{pmatrix}
    0 & 1 \\
    1 & 1 \\
    0 & 1 \\
    0 & 0
    \end{pmatrix} \qquad \mapsto \qquad \leftm' =
    \begin{pmatrix}
    1 & 0 & 0\\
    1 & 1 & 0\\
    0 & 1 & 1\\
    0 & 0 & 0
    \end{pmatrix},\quad
    \rightm' =
    \begin{pmatrix}
    0 & 1 & 0 \\
    1 & 1 & 0\\
    0 & 1 & 1\\
    0 & 0 & 0
    \end{pmatrix}\,.
\end{equation}
\item \textbf{Horizontal concatenation:} Given a pair $(\leftm',\rightm')\in\pairset$ with a row of 0s in common with $(\leftm,\rightm)$, concatenating the matrices horizontally (i.e.\ concatenating each row of $\leftm$ with the corresponding row of $\leftm'$ to make a row of the new matrix, and similarly for $\rightm$, $\rightm'$). This corresponds to adding the corresponding inequalities.
\end{enumerate}

\subsection{Balanced \& superbalanced inequalities}
\label{sec:balance}

We now define two important properties of inequalities which will be used throughout the paper.
\begin{defi}[Balance]\label{def:balance}
A pair $(\leftm,\rightm) \in \pairset$ is \emph{balanced} if, for all $i\in[\N+1]$, $|\leftocc_{(i)}| = |\rightocc_{(i)}|$. Correspondingly, an inequality is \emph{balanced} if every party $i\in[\N+1]$ appears the same number of times on the left- and right-hand sides. We define $\bset\subset\pairset$ to be the set of balanced pairs $(\leftm,\rightm)$.
\end{defi}

\begin{defi}[Superbalance]\label{def:superbalance}
A pair $(\leftm,\rightm) \in \pairset$ and the corresponding inequality are \emph{superbalanced} if, for all $i,j \in [\N + 1]$, $\leftocc_{(i)} \cdot \leftocc_{(j)} = \rightocc_{(i)} \cdot \rightocc_{(j)}$. Correspondingly, an inequality is \emph{superbalanced} if it is balanced and every pair $i,j\in[\N+1]$ appears the same number of times on the left- and right-hand sides. We define $\sbset\subset\bset$ to be the set of superbalanced pairs $(\leftm,\rightm)$.

\end{defi}

\begin{remark}
Balance is a notion that depends on the choice of purifier. In other words, it is not invariant under the transformation \eqref{newpurifier}. Superbalance, on the other hand, is preserved by this transformation, and therefore does not depend on the choice of purifier. Both balance and superbalance are preserved by the other transformations of subsec.\ \ref{sec:transformations}.
\end{remark}

\begin{eg}
An example of a balanced but not superbalanced HEI is SSA as given in \eqref{eq:SSA}. As explained in the remark above, its purification \eqref{eq:WM} obtained from transformation \eqref{eq:SSA-to-WM} is not balanced. On the other hand, the reader can check that MMI \eqref{eq:MMI} is superbalanced.
\end{eg}
Superbalance is a particularly important structural property of holographic entropy inequalities. In fact, every primitive HEI except SA is superbalanced \cite{He:2020xuo}. Thus, every HEI is a non-negative integer combination of SAs and a superbalanced HEI.

\subsection{Centered inequalities \& null reduction}
In this paper, we will be particularly interested in a class of inequalities that we call \emph{centered}. 

\begin{defi}[Centered]\label{def:centered}
Given $i\in[\N+1]$, a pair $(\leftm,\rightm) \in \pairset$ is \emph{centered on $i$} if it is balanced and if $\leftocc_{(i)} = e$ and $\rightocc_{(i)} = e$. Correspondingly, an inequality is \emph{centered on $i$} if it is balanced and the party $i$ appears in every term on both sides. We call $i$ the \emph{central party}. We say  $(\leftm,\rightm)$ and the corresponding inequality are \emph{centered} if they are centered on some party. We define $\cset \subset \bset$ to be the set of centered pairs $(\leftm,\rightm)$.
\end{defi}

\begin{remark}
It follows directly from the definition that any centered inequality has an equal number of terms on the LHS and RHS, i.e.\ $\mL = \mR$. It also follows that the central party cannot be the purifier.
\end{remark}

\begin{eg}
SSA as written in \eqref{eq:SSA} is an HEI centered on party 2. Of course, there are plenty of examples of centered inequalities that are not HEIs, such as
\begin{equation}
    \ent{13} + \ent{124} \ge \ent{14} + \ent{123}.
\end{equation}
The fact that this is not an HEI follows immediately from the observation that under permuting 3 and 4, the LHS and RHS get flipped.
\end{eg}

\begin{remark}\label{rem:transformationscentered}
The transformations listed in subsec.\ \ref{sec:transformations} preserve the property of being centered on party $i$, with the following caveats:
\begin{itemize}
\item For transformation 2 (row permutation), if the permutation maps party $i$ to party $j$, then the resulting inequality is centered on $j$.
\item For transformation 3 (choosing a different purifier), one must choose $i$ to be the new purifier in order for the new inequality to be centered; the old purifier $p$ then becomes the new central party, and the entire matrices $\leftm$, $\rightm$ are bit-flipped.
\item Transformation 5 (appending columns) only preserves this property if the appended column has a 1 in the row $i$.
\item Transformation 6 (horizontal concatenation) only preserves this property if $(\leftm',\rightm')$ is also centered on $i$.
\end{itemize}
\end{remark} 

A centered inequality can be obtained from any superbalanced inequality by the operation of null reduction, whose physical importance was explained in \cite{Grimaldi:2025jad} and contextualized in the Introduction.

\begin{defi}[Null reduction]\label{def:nr}
Given the  pair $(\leftm, \rightm) \in \pairset$, its \emph{null reduction} on $i \in [\N+1]$ is the pair $(\leftm', \rightm')$ obtained by keeping all columns of $(\leftm, \rightm)$  containing a 1 on the $i^{\text{th}}$ row. Correspondingly, the \emph{null reduction} of an inequality on $i$ is the inequality obtained by keeping all terms containing $i$.\footnote{Null-reducing an inequality on a party that does not appear in it, such as the purifier, yields the trivial inequality. We will usually implicitly consider only null reductions on parties that appear in the given inequality.}
\end{defi}

It was proved in \cite{Grimaldi:2025jad} (cf.\ Lemma 1) that the null reduction of a superbalanced inequality is always centered. For ease of reference, we repeat the proof here in the matrix language. 

\begin{lemma}\label{lem:nr-is-centered}
If $(\leftm,\rightm) \in \sbset$, then its null reduction $(\leftm',\rightm')$ on $i\in[\N+1]$ is centered on $i$.  
\end{lemma}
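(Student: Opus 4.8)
The plan is to show the two defining conditions of ``centered on $i$'' directly from the definition of null reduction and the hypothesis of superbalance. Recall that $(\leftm',\rightm')$ is obtained by keeping precisely those columns of $(\leftm,\rightm)$ that have a $1$ in row $i$. First I would establish that $\leftocc'_{(i)} = e$ and $\rightocc'_{(i)} = e$. This is essentially immediate: by construction, every surviving column of $\leftm$ has a $1$ in row $i$, so the $i$th row of $\leftm'$ is all $1$s, and likewise for $\rightm'$. The only subtlety is that we need the same number of columns on the two sides, i.e.\ $\mL' = \mR'$, for the statement to make sense together with balance; this will follow from the balance check below, since the number of surviving columns on each side equals $|\leftocc_{(i)}|$ and $|\rightocc_{(i)}|$ respectively.

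Next I would verify that $(\leftm',\rightm')$ is balanced, which is the substantive part. For each party $j\in[\N+1]$, I must show $|\leftocc'_{(j)}| = |\rightocc'_{(j)}|$. The key observation is that $|\leftocc'_{(j)}|$ counts the columns of $\leftm$ that have a $1$ in \emph{both} row $i$ and row $j$ — for $j \ne i$ these are the surviving columns (those with a $1$ in row $i$) that also contain party $j$, and for $j = i$ this counts the surviving columns, which all contain $i$. In either case this count is exactly $\leftocc_{(i)} \cdot \leftocc_{(j)}$, the dot product of the two occurrence vectors in the original matrix, since that dot product tallies the positions where both rows are $1$. Likewise $|\rightocc'_{(j)}| = \rightocc_{(i)} \cdot \rightocc_{(j)}$. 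Now the superbalance hypothesis $(\leftm,\rightm)\in\sbset$ gives precisely $\leftocc_{(i)}\cdot\leftocc_{(j)} = \rightocc_{(i)}\cdot\rightocc_{(j)}$ for all $i,j$, so $|\leftocc'_{(j)}| = |\rightocc'_{(j)}|$ for every $j$, establishing balance.

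Putting these together, $(\leftm',\rightm')$ is balanced with $\leftocc'_{(i)} = \rightocc'_{(i)} = e$, which is exactly the definition of centered on $i$. I should also confirm that $(\leftm',\rightm')$ still lies in $\pairset$, i.e.\ that it retains a common all-zero row: the purifier $p$ has $\leftocc_{(p)} = \rightocc_{(p)} = 0$ in the original pair, so none of its entries are $1$ and row $p$ remains all zeros after deleting columns. (One mild edge case worth a remark: if $i = p$ is the purifier, then no columns survive and the null reduction is the trivial empty inequality, consistent with the footnote in \cref{def:nr}; the interesting content is for $i$ appearing in the inequality.) I do not anticipate a genuine obstacle here — the whole argument is a translation of the combinatorial bookkeeping of column-deletion into the dot-product language of occurrence vectors. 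The only place demanding care is recognizing that ``number of surviving columns containing $j$'' equals the dot product $\leftocc_{(i)}\cdot\leftocc_{(j)}$, since this identity, $|x\wedge x'| = x\cdot x'$ for bit-strings, is exactly what converts the counting argument into the superbalance condition and makes the proof transparent.
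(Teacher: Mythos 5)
Your proof is correct and follows essentially the same route as the paper's: both identify $|\leftocc'_{(j)}|$ with the dot product $\leftocc_{(i)}\cdot\leftocc_{(j)}$ (and likewise on the right) and then invoke superbalance to conclude balance, with the centering condition $\leftocc'_{(i)}=\rightocc'_{(i)}=e$ immediate from the definition of null reduction. Your extra remarks on $\pairset$ membership and the purifier edge case are harmless additions the paper leaves implicit.
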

\begin{proof}
We need to show that the null reduction of $(\leftm,\rightm)$ on party $i$ gives rise to a balanced pair $(\leftm',\rightm')$  with $\leftocc'_{(i)} = \rightocc'_{(i)} = e$. Clearly $\leftocc'_{(i)} = \rightocc'_{(i)} = e$, since by the definition of null reduction, we keep all columns of $(\leftm,\rightm)$ that have a 1 on the $i^{\text{th}}$ row. By superbalance of $(\leftm,\rightm)$, for any $j\in [\N+1]$, we have
\begin{equation}   
|\leftocc'_{(j)}|=\leftocc'_{(i)} \cdot \leftocc'_{(j)}=\leftocc_{(i)} \cdot \leftocc_{(j)} = \rightocc_{(i)} \cdot \rightocc_{(j)} = \rightocc'_{(i)} \cdot \rightocc'_{(j)} = |\rightocc'_{(j)}|\,,
\end{equation}
which proves balance.
\end{proof}

\section{Dominance properties}\label{sec:dominance}

Here we introduce several key properties of inequalities (or matrix pairs). In the next section, we will prove logical relations among them. All the properties presented below are preserved by the transformations on the inequality (or matrix pair) of either subsec.\ \ref{sec:transformations} (for general inequalities) or Remark \ref{rem:transformationscentered} (for centered inequalities). Some further intuitive descriptions of these dominance properties can be found in subsec.\ \ref{ssec:discussion-summary}. We begin by introducing the concept of a dominance map.

\subsection{Dominance}

\begin{defi}[Dominance map]\label{def:dominance-map}
A \emph{dominance map} for a pair $(\leftm,\rightm)\in\pairset$ is a map $f: \{0,1\}^{\mL} \to \{0,1\}^{\mR}$ such that, for all $u\in\{0,1\}^{\mL}$, $|u|=|f(u)|$ and $\leftm u^T\le\rightm f(u)^T$.
\end{defi}
\begin{defi}[Dominance]\label{def:dominance}
$(\leftm,\rightm) \in \pairset$ obeys \emph{dominance} if it admits a dominance map.
\end{defi}

\begin{remark}
The above definition has a dual combinatorial description in terms of subsets of terms of the inequality represented by $(\leftm, \rightm)$. Indeed, a bit-string $u \in \{0,1\}^{\mL}$ can be understood as dictating whether each column is to be included (1) or to be excluded (0). For example,  the bit-string $u = (1,0,1)$ corresponds to keeping columns 1 and 3 only, i.e.\ to the subset $U = \{1,3\}$. Therefore, there is a bijection between subsets $U\subseteq[\mL]$ and bit-strings $u \in \{0,1\}^{\mL}$. Then, the existence of a dominance map $f(u)= v$ can be rephrased as the existence, for each $U\subseteq[\mL]$, of a subset $V\subseteq [\mR]$ with the same number of elements as $U$ and such that 
\begin{equation}\label{eq:UVdom}
    \leftm^{(U)}\, e^T \leq \rightm^{(V)}\, e^T\,.
\end{equation}
We can equivalently express the $i$th row of \eqref{eq:UVdom} as
\begin{equation}\label{eq:UVdomi}
    u\cdot \leftocc_{(i)} \leq f(u) \cdot \rightocc_{(i)}\,.
\end{equation}
In terms of the inequality, the statement is that for any subset $U$ of the LHS terms there exists an equal-sized subset $V$ of the RHS terms containing every party at least as many times as $U$;
we say that $U$ \emph{is dominated by} $V$.
\end{remark}

Let's see an example and a non-example of dominance.

\begin{eg}[Example of dominance]\label{eg:example-dominance}
Consider SSA
\begin{equation}
    \ent{12} + \ent{23} \ge \ent{2} + \ent{123}\,.
\end{equation}
Then the possible $U$-subsets of the LHS are $\{1\},\{2\}$ and $\{1,2\}$. The corresponding dominating $V$-subsets are $\{2\}, \{2\}$ and $\{1,2\}$, i.e.
\begin{align}
    &U = \{1\}& &V = \{2\}  &&\{\ent{12}\} \text{ dominated by } \{\ent{123}\}\\
    &U = \{2\} &&V = \{2\}  &&\{\ent{23}\} \text{  dominated by } \{\ent{123}\}\\
    &U = \{1,2\}& &V = \{1,2\}  &&\{\ent{12}, \ent{23}\} \text{ dominated by } \{\ent{2},\ent{123}\}
\end{align}
\end{eg}

\begin{eg}[Non-example of dominance]\label{eg:non-example-dominance}
Consider the following $\N = 6$ HEI\footnote{This primitive inequality is a permutation of the 14th in the $\N=6$ list in the publicly available database \cite{hecdata}.} written in matrix form:
\begin{equation}
    \leftm = \begin{pmatrix}
1&0&1&0&0&0&0\\
0&1&1&0&0&0&1\\
1&1&0&1&1&0&0\\
1&0&1&1&0&1&1\\
0&1&1&1&1&1&0\\
0&0&0&0&1&1&0\\
0&0&0&0&0&0&0
\end{pmatrix},\quad \rightm= \begin{pmatrix}
0 & 0 & 0 & 0 & 1 & 0 & 0 & 0 & 1\\
0 & 0 & 1 & 0 & 0 & 0 & 1 & 0 & 1\\
1 & 0 & 0 & 1 & 0 & 0 & 0 & 1 & 1\\
0 & 1 & 0 & 0 & 1 & 0 & 1 & 1 & 1\\
0 & 0 & 0 & 1 & 0 & 1 & 1 & 1 & 1\\
0 & 0 & 0 & 0 & 0 & 1 & 0 & 1 & 0\\
0 & 0 & 0 & 0 & 0 & 0 & 0 & 0 & 0
\end{pmatrix}
\end{equation}
(We write the inequality in matrix form to avoid clutter and to provide a practical example of working with matrices.) For $U = \{1,2,3\}$, there is no dominating $V$. Indeed, to satisfy \eqref{eq:UVdom} for $i=1$, we need $V \supset \{5,9\}$; for $i=2$, we additionally need to include 3 or 7, but that already saturates $|V|$, so dominance then fails on $i=3$. This example shows that not every HEI obeys dominance.
\end{eg}

\subsection{Inclusion dominance}

Dominance is quite a restrictive property, due to the equal-size condition on the subsets $U$ and $V$. Clearly if there was no such condition, \eqref{eq:UVdom} would be trivially satisfied by any balanced inequality. However, we will show that all centered HEIs not only obey dominance but also a stronger property, namely the existence of a dominance map that preserves inclusion relations: for every pair $U'$, $U$ of LHS subsets with $U'\subseteq U$, there exist dominating subsets $V'$, $V$ of RHS subsets with $V'\subseteq V$.We refer to this stronger property as inclusion dominance.

\begin{defi}[Inclusion dominance map]\label{def:inclusion-dominance-map}
    An \emph{inclusion dominance map} for a pair $(\leftm,\rightm) \in \pairset$ is a dominance map $f: \{0,1\}^{\mL} \to \{0,1\}^{\mR}$ such that, for all $ u,u' \in \{0,1\}^{\mL}$ with  $u' \leq u$, $f(u') \leq f(u)$.
\end{defi}
\begin{defi}[Inclusion dominance]\label{def:inclusion-dominance}
We say $(\leftm,\rightm) \in \pairset$ obeys \emph{inclusion dominance} if it admits an inclusion dominance map.
\end{defi}

Let's see an example and a non-example.

\begin{eg}[Example of inclusion dominance]
Consider the following $\N=5$ centered HEI\footnote{This inequality is the null reduction on party 1 of the 6th HEI in the $\N=6$ list in \cite{hecdata} (where here we have joined parties 5 and 6 to make it an $\N = 5$ inequality).}: 
\begin{equation}
    \ent{123} + \ent{124} + \ent{125} + \ent{135} \geq \ent{12} + \ent{13} + \ent{1235} + \ent{1245}\,.
\end{equation}
We are not going to prove that inclusion dominance holds for all possible subsets, but focusing at least on the chain of inclusions $\{1\} \subset \textcolor{darkgreen}{\{1,2\}} \subset \textcolor{blue}{\{1,2,3\}} \subset \textcolor{red}{\{1,2,3,4\}}$ for the $U$-subsets we can see that the corresponding dominating subsets $V$ preserve the inclusion structure, as follows: 
\begin{equation*}
    \fcolorbox{red}{red!15}{\fcolorbox{blue}{blue!15}{\fcolorbox{darkgreen}{darkgreen!15}{\fcolorbox{black}{gray!15}{$\ent{123}$} + $\ent{124}$} + $\ent{125}$} + $\ent{135}$} \geq \fcolorbox{red}{red!15}{\fcolorbox{blue}{blue!15}{$\ent{12}$} + $\ent{13}$ + \fcolorbox{blue}{blue!15}{\fcolorbox{darkgreen}{darkgreen!15}{\fcolorbox{black}{gray!15}{$\ent{1235}$} + $\ent{1245}$}}}
\end{equation*}
Indeed we have $\{3\} \subset \textcolor{darkgreen}{\{3,4\}} \subset \textcolor{blue}{\{1,3,4\}} \subset \textcolor{red}{\{1,2,3,4\}}$ on the RHS.
\end{eg}
\begin{eg}[Non-example of inclusion dominance]\label{eg:non-example-inclusion-dom}
Consider the following centered $\N=5$ non-HEI inequality\footnote{The inequality is the null reduction on party 1 of the non-HEI obtained from the difference between MMI on parties $\{1\}, \{2,5\}$ and $\{3,4\}$ and the HEI
\begin{multline}
    \ent{24}+\ent{35}+\ent{123}+\ent{124}+\ent{125}+\ent{134}+\ent{145}\\ \geq \ent{2}+\ent{3}+\ent{4}+\ent{5}+\ent{12}+\ent{14}+
    \ent{135}+\ent{1234}+\ent{1245},
\end{multline}
which is a permutation of the 4th inequality in the $\N = 5$ list in \cite{hecdata}. The null reduction on party 1 of this difference is \eqref{eq:eg-ne-id} which is also not an HEI, as explained in \ref{ssec:cond_incdom}.
}:
\begin{equation}\label{eq:eg-ne-id}
    \ent{12} + \ent{14} + \ent{135} + \ent{1234} +\ent{1245} \geq \ent{1} + \ent{123} + \ent{124} + \ent{145} + \ent{12345}\,.
\end{equation}
It can be shown that this inequality obeys dominance. However, it fails inclusion dominance: for $U = \textcolor{red}{\{3,4,5\}}$ there is a unique dominating subset $V = \textcolor{red}{\{2,4,5\}}$, but for $U'=\textcolor{blue}{\{4,5\}}\subset U$ the unique dominating subset is $V' = \textcolor{blue}{\{3,5\}} \not\subset V$, i.e.\
\begin{multline}
    \ent{12} + \ent{14} + \fcolorbox{red}{red!15}{$\ent{135}$ + \fcolorbox{blue}{blue!15}{$\ent{1234} +\ent{1245}$}} \\
    \geq \ent{1} + \fcolorbox{red}{red!15}{$\ent{123}$} + \fcolorbox{blue}{blue!15}{$\ent{124}$} + \fcolorbox{red}{red!15}{$\ent{145}$ + \fcolorbox{blue}{blue!15}{$\ent{12345}$}}\,.
\end{multline}
\end{eg}

In the next section we will show that inclusion dominance is a necessary and sufficient condition for a centered inequality to be an HEI, whereas dominance is only necessary. Inclusion dominance clearly imposes a very strict combinatorial constraint on the structure of centered HEIs. In contrast, the weaker notion of dominance, while only necessary for a centered inequality to be an HEI, is nonetheless important because, as we will see, it is a  sufficient condition for PCM (cf.~Def.\ \ref{def:PCM}), $\leftm \preceq^{\rm PC} \rightm$. We now introduce a necessary combinatorial condition for PCM, which we call region dominance.

\subsection{Region dominance}

\begin{defi}[Region dominance]\label{def:region-dominance}
We say that $(\leftm,\rightm)\in \pairset$ obeys \emph{region dominance} if it is balanced and, for any collection of parties $X \subseteq [\N+1]$, we have
\begin{equation}\label{eq:region-dominance}
     \Big| \bigwedge_{i \in X} \leftocc_{(i)} \Big| \leq \Big| \bigwedge_{i \in X} \rightocc_{(i)} \Big|\,;
\end{equation}
equivalently, the number of terms on the LHS of the inequality containing $X$ does not exceed the number on the RHS.
\end{defi}

Region dominance can be understood as the generalization of the notions of balance and superbalance to collections of parties of any size, but where equality between the LHS and RHS is weakened to an inequality. The origin of the property’s name lies in the physics convention that a region denotes a collection of parties. 

\begin{eg}[Example of region dominance]\label{eg:eg-region-dominance}
Consider the following $\N = 6$ non-HEI\footnote{This inequality is not an HEI because it is the difference of two primitive HEIs. See subsec.\ \ref{ssec:bhei-rd} for more details on this.}:
\begin{multline}\label{eq:RDexample}
    \ent{34} + \ent{123} + \ent{124} + \ent{136} + \ent{145} + \ent{236} + \ent{245} \geq\\
    \ent{2} + \ent{13} + \ent{14} + \ent{36} + \ent{45} + \ent{234} + \ent{1236} + \ent{1245}\, . 
\end{multline}
We will check with some examples how the above inequality satisfies region dominance. For example consider the regions $\textcolor{red}{\{5\}},\textcolor{blue}{\{1,2,3\}}, \textcolor{darkgreen}{\{2,3,4\}}$ and $\textcolor{orange}{\{1,2,4,5\}}$, i.e.\
\begin{multline}
    \ent{34} + \ent{\fcolorbox{blue}{blue!30}{123}} + \ent{124} + \ent{136} + \ent{14\fcolorbox{red}{red!30}{5}} + \ent{236} + \ent{24\fcolorbox{red}{red!30}{5}} \geq\\
    \ent{2} + \ent{13} + \ent{14} + \ent{36} + \ent{4\fcolorbox{red}{red!30}{5}} + \ent{\fcolorbox{darkgreen}{darkgreen!30}{234}} + \ent{\fcolorbox{blue}{blue!30}{123}6} + \ent{\fcolorbox{orange}{orange!20}{124\fcolorbox{red}{red!30}{5}}}\,. 
\end{multline}
Indeed, each subset appears at least as many times on the RHS as on the LHS. (We will see in subsec.\ \ref{ssec:bhei-rd} that \eqref{eq:RDexample} is not an HEI.)
\end{eg}

Notice the key distinction between region dominance and dominance:  while dominance concerns subsets of terms on each side of an inequality which correspond to different subsystems, region dominance concerns subsets of parties.  In this sense, both conditions in fact constitute different generalizations of balance, for which  $\le$ is specialized to $=$ and a restriction is made on the selection: for region dominance it comprises just the single parties $|X|=1$, whereas for dominance it comprises the full subset of terms, $|u|=\mL = \mR$. Inclusion dominance, one the other hand, has no direct relation to balance, since it concerns incomplete collections of terms.  Starting from the full collection of terms, it states that for any choice of term to drop from the LHS collection, there is a ``smaller'' term we can drop from the RHS collection, and that we can continue this chain all the way down to a single term, where the RHS one is at least as large as the LHS one.  (This in particular implies that the RHS must contain both the largest and the smallest term appearing in the inequality.)

\section{Theorems \& counterexamples}\label{sec:theorems}

This section will be dedicated to proving the main theorems of this paper. To help the reader follow the logical relations among the properties introduced in the previous section, in fig.\ \ref{fig:logic-map} we provide a diagram of the various implications. For every one-way implication, we will also provide a counterexample to the converse implication in its respective subsection.

Theorem \ref{thm:if-HEI-then-contraction} establishes Conjecture \ref{conj:3}, while Theorems  \ref{thm:if-HEI-then-contraction}, \ref{thm:contraction-equals-inclusion-dom}, \ref{thm:if-dominance-then-PCM} together establish Conjecture \ref{conj:1}. The counterexample we will display for the converse to Theorem \ref{thm:if-HEI-then-contraction} falsifies Conjecture \ref{conj:4}, but we will give evidence that a slightly weaker statement, which we will call Conjecture \ref{conj:4prime}, is still viable. Using Theorems \ref{thm:contraction-equals-inclusion-dom} and \ref{thm:if-dominance-then-PCM}, the same counterexample also falsifies Conjecture \ref{conj:2}. However, in this case, we will show that the analogous weakened statement is also false.

\begin{figure}
\begin{center}
\begin{tikzpicture}[
    font=\footnotesize,
    box/.style={
        draw, fill=blue!10, inner sep=4pt,align=center,
        fill=#1!15, thick
    }
]
\pgfmathsetmacro{\xT}{0.5};   
\pgfmathsetmacro{\xS}{-0.6};  
\pgfmathsetmacro{\xA}{3.0};   
\pgfmathsetmacro{\yA}{1.7};   

\node at (-5,1) {\Large \textcolor{red}{Superbalanced}};
\node[box=red]  (Scm) at (-5,0) {\hyperref[def:superbalance]{HEI}};

\node at (0,1) {\Large \textcolor{orange}{Centered}};
\node at (-2.5,2.5) {\large \hyperref[def:nr]{Null reduce}};
\node[box=orange] (Pcm) at (0,0) {\hyperref[def:centered]{HEI}};
\node[box=orange] (Pid)  at (0,-1*\yA) {\hyperref[def:inclusion-dominance]{Inclusion dominance}};
\node[box=orange] (Pd)   at (0,-2*\yA) {\hyperref[def:dominance]{Dominance}};
\node[box=orange] (Ppcm) at (0,-3*\yA) {\hyperref[def:PCM]{PCM}};
\node[box=orange] (Pbcm) at (0,-4*\yA) {\hyperref[def:BCM]{BCM}};
\node[box=orange] (Prd)  at (0,-5*\yA) {\hyperref[def:region-dominance]{Region dominance}};

\node[box=darkgreen] (Bhei) at (5,0) {\hyperref[def:balance]{HEI}};
\node[box=darkgreen] (Brd)  at (5,-\yA) {\hyperref[def:region-dominance]{Region dominance}};
\node at (5,1) {\Large \textcolor{darkgreen}{Balanced}};

\draw[-implies, double distance=2.5pt,shorten <=1mm,
  shorten >=1mm, thick] (Scm) -- (Pcm);
\draw[implies-implies, double distance=2.5pt,shorten <=1mm,
  shorten >=1mm, thick] (Pcm) -- (Pid);
\draw[-implies, double distance=2.5pt,shorten <=1mm,
  shorten >=1mm, thick] (Pid) -- (Pd);
\draw[-implies, double distance=2.5pt, shorten <=1mm,
  shorten >=1mm, thick] (Pd) -- (Ppcm);
\draw[-implies, double distance=2.5pt, shorten <=1mm,
  shorten >=1mm, thick] (Ppcm) -- (Pbcm);
\draw[-implies, double distance=2.5pt, shorten <=1mm,
  shorten >=1mm, thick] (Pbcm) -- (Prd);
\draw[-implies, double distance=2.5pt, shorten <=1mm,
  shorten >=1mm, thick] (Bhei) -- (Brd);
\draw[->] (-5,1.5) to[bend left=25] (0,1.5);
\draw[->, dashed] (Scm) to[bend right=15] (-0.6,-5.1);

\node at (1.2*\xT, -0.5*\yA) {\,\, Thm.\ \ref{thm:contraction-equals-inclusion-dom}};
\node at (1.2*\xT, -1.5*\yA) {Def.
};
\node at (1.2*\xT, -2.5*\yA) {\,\, Thm.\ \ref{thm:if-dominance-then-PCM}};
\node at (1.2*\xT, -3.5*\yA) {Def.};
\node at (1.2*\xT, -4.5*\yA) {\,\, Thm.\ \ref{thm:if-bcm-then-rd}};
\node at (-2.5, 0.375*\yA)     {\,\, Thm.\ \ref{thm:if-HEI-then-contraction}};
\node at (-2.5, 0.175*\yA)     {Conj.\ \ref{conj:3} \checkmark};

\node at (5+1.2*\xT,  -0.5*\yA) {\,\, Thm.\ \ref{thm:if-balanced-hei-then-region-dominance}};

\node at (\xS, -0.5*\yA) {\textsection\ref{ssec:cond_incdom}};
\node at (\xS, -1.5*\yA) {\textsection\ref{ssec:cond_incdom}};
\node at (\xS, -2.5*\yA) {\textsection\ref{ssec:dom_PCM}};
\node at (\xS, -3.5*\yA) {\textsection\ref{ssec:pcm-bcm}};
\node at (\xS, -4.5*\yA) {\textsection\ref{ssec:bcm-rd}};
\node at (-2.5, -\yA/4)   {\textsection\ref{ssec:contS_contNR}};
\node at (5+\xS, -0.5*\yA) {\textsection\ref{ssec:bhei-rd}};

\node at (-3.75, -2.1*\yA) {Conj.\ \ref{conj:1} \checkmark};

\end{tikzpicture}
\end{center}
\caption{Map of the logical implications among the theorems presented in this paper. We organized the map based on the different type of inequalities the theorems apply to: superbalanced ($\sbset$), centered ($\cset$), and balanced ($\bset$). We hyperlinked every logical implication to the respective theorem and section, as well as the names in the boxes to their respective definitions. We also show how conjectures \ref{conj:1} and \ref{conj:3} are resolved (conjectures \ref{conj:2} and \ref{conj:4} go in the opposite directions, and we provide counterexamples in subsec.\ \ref{ssec:dom_PCM} and \ref{sssec:converses} respectively).}
\label{fig:logic-map}
\end{figure}

\subsection{Null reduction maps HEIs to HEIs}
\label{ssec:contS_contNR}

In this subsection, we prove that every null reduction of a superbalanced HEI is an HEI, confirming conjecture \ref{conj:3}.\footnote{
This was proved in \cite[Thm.\ 2]{Grimaldi:2025jad} under the assumption that the inequality can be rendered in a tripartite form.  Here we drop the TF-compatibility assumption, so the proof is self-sufficient without the extra conjecture.}

\begin{thm}\label{thm:if-HEI-then-contraction}
If $(\leftm,\rightm)\in \sbset$ is an HEI, then its null reduction on any $i\in[\N+1]$ is an HEI.
\end{thm}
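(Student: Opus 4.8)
The plan is to produce a contraction map for the null reduction directly from a contraction map for the original pair, whose existence is guaranteed by the assumption that $(\leftm,\rightm)$ is an HEI. Fix $i$, assume it appears (otherwise the null reduction is the trivial $0\ge 0$), and let $f:\{0,1\}^{\mL}\to\{0,1\}^{\mR}$ be a contraction map for $(\leftm,\rightm)$. Let $S$ be the set of columns of $\leftm$ containing $i$ (the positions of the $1$s in $\leftocc_{(i)}$) and $T$ the set of columns of $\rightm$ containing $i$ (the positions of the $1$s in $\rightocc_{(i)}$); these are exactly the columns kept by null reduction, and by balance $|S|=|T|=m$, so by Lemma \ref{lem:nr-is-centered} the pair $(\leftm',\rightm')$ is centered on $i$ with $\mL'=\mR'=m$. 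Let $\iota:\{0,1\}^{S}\to\{0,1\}^{\mL}$ place a string on the coordinates $S$ and $0$ elsewhere, and let $\pi:\{0,1\}^{\mR}\to\{0,1\}^{T}$ be restriction to $T$. The candidate is $f'(u):=\pi\bigl(f(\iota(u))\bigr)$.

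The crux is a support-localization claim: for every $u\in\{0,1\}^{S}$, the string $f(\iota(u))$ is supported entirely in $T$ and has exactly $|u|$ ones. To establish it, note $f(\iota(\mathbf{0}))=f(\mathbf{0})=\mathbf{0}$ (the purifier row forces $f(\mathbf{0})=\mathbf{0}$) and $f(\iota(e))=f(\leftocc_{(i)})=\rightocc_{(i)}$, whose support is exactly $T$. Letting $a,b$ be the number of $1$s of $f(\iota(u))$ inside and outside $T$, contraction against $\mathbf{0}$ gives $a+b=\ham{f(\iota(u))}{\mathbf{0}}\le\ham{\iota(u)}{\mathbf{0}}=|u|$, while contraction against $\leftocc_{(i)}$, using $\ham{\iota(u)}{\leftocc_{(i)}}=m-|u|$, gives $(m-a)+b\le m-|u|$, i.e.\ $a\ge|u|+b$. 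Together these force $b=0$ and $a=|u|$. Granting this, the contraction property of $f'$ is immediate: since the images of any two strings lie in $T$, restriction preserves their Hamming distance, so $\ham{f'(u)}{f'(u')}=\ham{f(\iota(u))}{f(\iota(u'))}\le\ham{\iota(u)}{\iota(u')}=\ham{u}{u'}$.

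The step I expect to be the real work is the boundary condition $f'(\leftocc'_{(j)})=\rightocc'_{(j)}$ for each party $j$, because $\iota(\leftocc'_{(j)})$ is the truncation of $\leftocc_{(j)}$ to $S$, not $\leftocc_{(j)}$ itself, so one cannot simply invoke $f(\leftocc_{(j)})=\rightocc_{(j)}$. This is exactly where superbalance enters. The Hamming distance between the truncation and the full row is $|\leftocc_{(j)}|-\leftocc_{(i)}\cdot\leftocc_{(j)}$; feeding this into contraction against $f(\leftocc_{(j)})=\rightocc_{(j)}$, and writing $w:=f(\iota(\leftocc'_{(j)}))$ (already known to be supported in $T$ with $\leftocc_{(i)}\cdot\leftocc_{(j)}=\rightocc_{(i)}\cdot\rightocc_{(j)}$ ones, using superbalance and the localization claim), the expansion $\ham{w}{\rightocc_{(j)}}=|w|+|\rightocc_{(j)}|-2\,w\cdot\rightocc_{(j)}$ together with balance $|\rightocc_{(j)}|=|\leftocc_{(j)}|$ collapses to $w\cdot\rightocc_{(j)}\ge\rightocc_{(i)}\cdot\rightocc_{(j)}=|w|$. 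Since always $w\cdot\rightocc_{(j)}\le|w|$, every $1$ of $w$ is a $1$ of $\rightocc_{(j)}$; as $w$ is supported in $T$ and its number of ones equals the number of ones of $\rightocc_{(j)}$ in $T$, we get $\pi(w)=\rightocc'_{(j)}$, hence $f'(\leftocc'_{(j)})=\rightocc'_{(j)}$. This verifies all the defining properties of a contraction map for $(\leftm',\rightm')$, and by the assumed equivalence between contraction maps and HEIs, the null reduction is an HEI.
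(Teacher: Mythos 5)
Your proof is correct and follows essentially the same route as the paper's: both restrict the contraction map to the columns containing $i$, establish the support-localization claim by playing the contraction property against the purifier row and the row of $i$ (adding the two resulting inequalities to force the "outside" bits to vanish), and then verify the boundary conditions $f'(\leftocc'_{(j)})=\rightocc'_{(j)}$ via a Hamming-distance computation resting on (super)balance. The only differences are notational (your $\iota,\pi$ maps versus the paper's column reordering and block decomposition $(x',\bar x)$) and that in the last step you invoke superbalance directly through $\leftocc_{(i)}\cdot\leftocc_{(j)}=\rightocc_{(i)}\cdot\rightocc_{(j)}$, where the paper uses the equivalent fact $|\bar\leftocc_{(j)}|=|\bar\rightocc_{(j)}|$ obtained from Lemma~\ref{lem:nr-is-centered}.
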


\begin{proof}
If the null-reduced inequality is trivial (i.e.\ if the party $i$ does not appear in the original inequality, for example if $i$ is the purifier), then the statement is obviously true. We therefore assume that the null-reduced inequality is non-trivial. We call it $(\leftm',\rightm')$.

Let $ m:=|\leftocc_{(i)}| = |\rightocc_{(i)}|$, and without loss of generality re-arrange the columns of $\leftm$ and $\rightm$ such that the first $m$ columns contain a 1 in row $i$ (i.e.\ we reorder the terms in the inequality so that the first $m$ terms on each side contain party $i$). Decompose the occurrence vector for party $j$ as $\leftocc_{(j)}=(\leftocc'_{(j)},\bar{\leftocc}_{(j)})$, where $\leftocc'_{(j)}$ is the first $m$ components and $\bar \leftocc_{(j)}$ is the remaining $\mL-m$  components, and similarly $\rightocc_{(j)}=(\rightocc'_{(j)},\bar \rightocc_{(j)})$. $\leftm'_{(j)}$, $\rightm'_{(j)}$ are the occurrence vectors for the null-reduced pair $(\leftm',\rightm')$. By Lemma \ref{lem:nr-is-centered}, $(\leftm',\rightm')$ is balanced; together with the fact that $(\leftm,\rightm)$ is also balanced, we have
\be\label{sbhat}
|\leftocc'_{(j)}| = |\rightocc'_{(j)}|, \quad |\bar \leftocc_{(j)}|=|\bar \rightocc_{(j)}|\,.
\ee

Next, we will show that the contraction map $f$ for $(\leftm,\rightm)$ maps bit strings of the form $(x',0)$ to bit strings of the form $(y',0)$ (where $x',y'\in\{0,1\}^m$, $(x',0)$ is the bit string of length $\mL$ padded with $\mL-m$ zeros, and similarly for $(y',0)$). To show this, we decompose $f((x',0))=(y',\bar y)$. The contraction property for $f$ implies
\begin{align}
|x'|= \ham{(x',0)}{\leftocc_{(p)}} &\ge \ham{(y',\bar y)}{\rightocc_{(p)}} =|y'|+|\bar y|\\
m-|x'|=\ham{(x',0)}{\leftocc_{(i)}}&\ge\ham{(y',\bar y)}{\rightocc_{(i)}}=m-|y'|+|\bar y|
\end{align}
(where $p$ is the purifier). Adding the inequalities gives $m\ge m+2|\bar y|$. Since $|\bar y|\ge0$, we must have $|\bar y|=0$, hence $\bar y=0$.

Given the result of the previous paragraph, let $f':\{0,1\}^m\to\{0,1\}^m$ be the map that takes $x'$ to $y'$. We will now show that $f'$ is a contraction map: it has the contraction property and maps $\leftm'_{(j)}$ to $\rightm'_{(j)}$. For any $x',z' \in \{0,1\}^m$, we have
\be
|x'-z'|=|(x',0)-(z',0)|\ge |f((x',0))-f((z',0))|=|f'(x')- f'(z')|\,,
\ee
so $f'$ has the contraction property. For any $j\in[\N+1]$, we have
\begin{align}\label{eq:Lbarj}
    |\bar\leftocc_{(j)}|
    &= \ham{(\leftocc'_{(j)},0)}{\leftocc_{(j)}}\nonumber \\
    &\ge
    \ham{ f(\leftocc'_{(j)},0)}{f(\leftocc_{(j)}) } =
\ham{(f'(\leftocc'_{(j)}),0)}{\rightocc_{(j)}}=\ham{f'(\leftocc'_{(j)})}{\rightocc'_{(j)}}+|\bar \rightocc_{(j)}|\,.
\end{align}
Together, \eqref{sbhat} and \eqref{eq:Lbarj} imply $\ham{ f'(\leftocc'_{(j)})}{\rightocc'_{(j)}} \le 0$, which in turn implies $\ham{f'(\leftocc'_{(j)})}{\rightocc'_{(j)}}=0$, hence $f'(\leftocc'_{(j)})=\rightocc'_{(j)}$.
\end{proof}

\addtocontents{toc}{\protect\setcounter{tocdepth}{2}}
\subsubsection{Converse statements}
\label{sssec:converses}

The converse of Theorem \ref{thm:if-HEI-then-contraction} is Conjecture \ref{conj:4}. We now show that it is false by direct counterexample. Consider the following non-HEI\footnote{The inequality is not an HEI because it is the difference of two primitive $\N = 5$ HEIs, specifically, the 6th inequality in the $\N = 5$ list of \cite{hecdata} minus a permutation of the 7th.} superbalanced inequality:
\begin{multline}\label{eq:ce-c4}
    \ent{123} + \ent{123} + \ent{124} + \ent{125} + \ent{135} + \ent{135} + \ent{234} + \ent{235} \geq\\
    \ent{12} + \ent{13} + \ent{15} + \ent{23} + \ent{24} + \ent{35} + \ent{1234} + \ent{1235} + \ent{1235}\,.
\end{multline}
The null reduction of the above on party 1 is
\begin{multline}
    \ent{123} + \ent{123} + \ent{124} + \ent{125} + \ent{135} + \ent{135} \geq\\
    \ent{12} + \ent{13} + \ent{15} +  \ent{1234} + \ent{1235} + \ent{1235}\,,
\end{multline}
which is just the sum of three instances of SSA: one on subsets $\{123\}$ and $\{124\}$, one on subsets $\{123\}$ and $\{135\}$, and one on subsets  $\{125\}$ and $\{135\}$. The null reductions on parties 2 and 3 are also the sum of three instances of SSAs. The null reduction on party 4 is
\begin{equation}
    \ent{124} + \ent{234} \geq \ent{24} + \ent{1234}\,,
\end{equation}
which is itself SSA on subsets $\{124\}$ and $\{234\}$. Finally, the null reduction on party 5 is
\begin{equation}
    \ent{125} + \ent{135} + \ent{135} + \ent{235} \geq \ent{15} + \ent{35} +        \ent{1235} + \ent{1235}\,,
\end{equation}
which is the sum of two SSAs, on subsets $\{125\}$ and $\{135\}$ and on subsets $\{135\}$ and $\{235\}$. Hence every null reduction 
is an HEI, yet the original inequality \eqref{eq:ce-c4} is not.

However, observe that the purification of  \eqref{eq:ce-c4} on party 2
\begin{multline}
    \ent{245} + \ent{245} + \ent{235} + \ent{234} + \ent{135} + \ent{135} + \ent{125} + \ent{124} \geq\\
    \ent{2345} + \ent{13} + \ent{15} + \ent{1245} + \ent{1235} + \ent{35} + \ent{25} + \ent{24} + \ent{24}\,,
\end{multline}
with subsequent null reduction on party 1, gives 
\begin{equation}
    \ent{135} + \ent{135} + \ent{125} + \ent{124} \geq \ent{13}  +\ent{15} + \ent{1245} + \ent{1235}
\end{equation}
which is not an HEI.\footnote{This can be directly checked by computing the inequality against $\mathsf{N} = 5$ extreme rays.} 

This observation motivates us to consider a corollary of Theorem \ref{thm:if-HEI-then-contraction} with a slightly weaker converse. 

\begin{cor}
If $(\leftm,\rightm)\in \sbset$ is an HEI then for all $i,j\in[\N+1]$ the null reduction on $i$ of the purification on $j$ of $(\leftm,\rightm)$ is an HEI.
\end{cor}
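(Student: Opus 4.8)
The plan is to reduce the corollary directly to \cref{thm:if-HEI-then-contraction} by verifying that the hypotheses of that theorem are preserved under the purification operation. Recall that \cref{thm:if-HEI-then-contraction} applies to any superbalanced HEI and concludes that its null reduction on any party is an HEI. So if I can show that the purification on $j$ of a superbalanced HEI $(\leftm,\rightm)$ is again a superbalanced HEI, then the corollary follows by applying the theorem to that purification, null-reducing on $i$.

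First I would check that purification preserves the HEI property. This is essentially immediate from the discussion in \cref{ssec:HEC}: purification (transformation 3 of \cref{sec:transformations}, i.e. \eqref{newpurifier}) corresponds to replacing every subset containing the new purifier by its complement, using the identity $\ent{X}=\ent{\mathscr{X}\setminus X}$, which is an exact relation satisfied by every entropy vector in $H_\N$. Hence an inequality holds on all of $H_\N$ if and only if its purification does; the HEI property is purification-invariant. Equivalently, one can invoke the stated fact that the transformations of \cref{sec:transformations} preserve the existence of a contraction map.

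Next I would confirm that purification preserves superbalance. This is exactly the content of the \textbf{Remark} following \cref{def:superbalance}, which states that superbalance is invariant under the choice-of-purifier transformation \eqref{newpurifier} (indeed, this is what makes superbalance a purifier-independent notion, unlike balance). So the purification on $j$ of $(\leftm,\rightm)\in\sbset$ is again in $\sbset$. Combining these two observations, the purification on $j$ is a superbalanced HEI, and \cref{thm:if-HEI-then-contraction} applied to it yields that its null reduction on $i$ is an HEI, which is precisely the claim.

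I do not anticipate a serious obstacle here, since the corollary is really just \cref{thm:if-HEI-then-contraction} conjugated by a symmetry that the excerpt has already established preserves both relevant properties. The only point requiring mild care is bookkeeping around the edge cases: one should note that if party $i$ does not appear in the purified inequality the null reduction is trivial (hence vacuously an HEI), and that the purification is well-defined for every $j\in[\N+1]$, so the quantifiers over $i,j$ cause no trouble. The substance of the statement—contrasting with the failed \cref{conj:4}—lies not in the proof but in the earlier counterexample, which shows that interleaving a purification step genuinely enlarges the family of HEIs reachable by null reduction and thereby rules out the naive converse.
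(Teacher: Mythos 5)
Your proposal is correct and follows exactly the route the paper intends: the corollary is stated without an explicit proof precisely because it is immediate from Theorem \ref{thm:if-HEI-then-contraction} once one notes that purification (transformation 3) preserves both the HEI property and superbalance, the latter being the content of the remark after Definition \ref{def:superbalance}. Your handling of the edge cases (trivial null reductions) matches the paper's convention as well.
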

\noindent The converse to this corollary is the following statement:
\begin{customconj}{4'}
\label{conj:4prime}
\emph{Given $(\leftm,\rightm)\in \sbset$, if for all $i,j\in[\N+1]$ the null reduction on $i$ of the purification on $j$ of $(\leftm,\rightm)$ is an HEI, then $(\leftm,\rightm)$ is an HEI.}
\end{customconj}
\noindent This is a weaker statement than Conjecture \ref{conj:4}: since the purifier is varied here, there are more null reductions that are assumed to be HEIs. Specifically, recalling from Remark \ref{rem:transformationscentered} that switching the purifier with the central party does not change whether a contraction map exists, there are $\N+1\choose 2$ distinct non-trivial null reductions starting from a given superbalanced inequality (versus $\N$ if the purifier is fixed).

We have some evidence in favor of Conjecture \ref{conj:4prime}. We set $\N=5$, which is the largest $\N$ for which we have complete knowledge of the HEC, and which includes uplifts of all the $\N<5$ HEIs. In the language of information quantities, we considered non-HIQs of the form
\begin{equation}\label{eq:conj4-ineq-form}
    \mathsf{Q}=p\, \mathsf{Q}_a - q\, \mathsf{Q}_b \,,
\end{equation}
where $\mathsf{Q}_a $ and $\mathsf{Q}_b $ are the 357 non-SA primitive HIQs and $(p,q)$ are coprime. (We did not include instances of SA because we want $p\, \mathsf{Q}_a - q\, \mathsf{Q}_b $ to be superbalanced.) For a fixed pair $(p,q)$, there are naively $357^2$ inequalities of the form \eqref{eq:conj4-ineq-form}, however that would be grossly overcounting: the 357 HIQs can be grouped into 7 symmetry orbits, and since we are democratically looking at all the ${{5+1}\choose{2}} = 15$ null reductions we can simply fix $\mathsf{Q}_a$ to be one of the 7 representatives. All in all, this gives us $7 \times 357 - 7 = 2492$ quantities for each $(p,q)$. We exhausted all quantities for $q\leq p \leq 11$, which is 42 pairs, for a total of 104,664 superbalanced non-HIQ quantities.  No counterexamples were found: in every case, at least one of the 15 null reductions failed to be an HIQ.

Nonetheless, one should be cautious. As we've seen several times in this paper, counterexamples to similar statements can be rare and hard to find. In this case, it is possible that counterexamples are rare simply because it is statistically unlikely for all $\N+1\choose2$ null reductions to be HEIs if the original inequality isn't one.

\subsection{Contraction is equivalent to inclusion dominance}
\label{ssec:cond_incdom}

We now want to show that a pair $(\leftm,\rightm) \in \cset$ admits a contraction map if and only if it admits an inclusion dominance map. We will actually prove something stronger, namely that a contraction map is equivalent to an inclusion dominance map.

\begin{thm}\label{thm:contraction-equals-inclusion-dom}
Given $(\leftm,\rightm) \in \cset$, a contraction map 
for $(\leftm,\rightm)$ is an inclusion dominance map 
and vice versa.
\end{thm}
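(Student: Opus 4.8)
The plan is to exploit a single algebraic reduction available because centered pairs have $\mL=\mR$: via the identity $\ham{a}{b}=|a|+|b|-2\,a\cdot b$, once a map $f$ is known to preserve Hamming weight (i.e.\ $|u|=|f(u)|$ for all $u$), the contraction inequality $\ham{x}{x'}\ge\ham{f(x)}{f(x')}$ becomes equivalent to the overlap inequality $x\cdot x'\le f(x)\cdot f(x')$, and the $i$th row of the dominance condition $\leftm u^T\le\rightm f(u)^T$, namely $u\cdot\leftocc_{(i)}\le f(u)\cdot\rightocc_{(i)}$, is exactly its per-party analog. So the theorem reduces to bookkeeping with weights and overlaps of bit-strings, together with the inclusion clause.

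\emph{Forward direction} (a contraction map is an inclusion dominance map). First I would use centeredness to pin down the two extreme values: since the central party $c$ has $\leftocc_{(c)}=\rightocc_{(c)}=e$ and the purifier $p$ has $\leftocc_{(p)}=\rightocc_{(p)}=0$, property 1 of the contraction map forces $f(e)=e$ and $f(0)=0$. Testing the contraction inequality against these fixed points gives $|u|\ge|f(u)|$ (from $x'=0$) and $\mL-|u|\ge\mR-|f(u)|$ (from $x'=e$), which together with $\mL=\mR$ yield weight preservation $|u|=|f(u)|$. With this in hand, testing contraction against $x'=\leftocc_{(i)}$ (so $f(x')=\rightocc_{(i)}$) and using balance $|\leftocc_{(i)}|=|\rightocc_{(i)}|$ collapses the Hamming inequality to $u\cdot\leftocc_{(i)}\le f(u)\cdot\rightocc_{(i)}$, so $f$ is a dominance map. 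For inclusion preservation, given $u'\le u$, contraction gives $\ham{f(u)}{f(u')}\le\ham{u}{u'}=|u|-|u'|$, while the trivial weight bound $\ham{f(u)}{f(u')}\ge|f(u)|-|f(u')|=|u|-|u'|$ forces equality; expanding then yields $f(u)\cdot f(u')=|f(u')|$, i.e.\ $f(u')\le f(u)$.

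\emph{Reverse direction} (an inclusion dominance map is a contraction map). Here weight preservation is part of the hypothesis. To recover property 1, I would feed $u=\leftocc_{(i)}$ into the $i$th-row dominance inequality to get $|\leftocc_{(i)}|=\leftocc_{(i)}\cdot\leftocc_{(i)}\le\rightocc_{(i)}\cdot f(\leftocc_{(i)})$; since $\rightocc_{(i)}\cdot f(\leftocc_{(i)})=|\rightocc_{(i)}\wedge f(\leftocc_{(i)})|\le|\rightocc_{(i)}|=|\leftocc_{(i)}|$ by balance, every inequality is tight, and together with $|f(\leftocc_{(i)})|=|\rightocc_{(i)}|$ this pins down $f(\leftocc_{(i)})=\rightocc_{(i)}$. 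For property 2, the reduction above means it suffices to prove $x\cdot x'\le f(x)\cdot f(x')$. This is the step where inclusion preservation is indispensable: I would set $w=x\wedge x'$, so that $w\le x$ and $w\le x'$, hence $f(w)\le f(x)$ and $f(w)\le f(x')$ by inclusion preservation, giving $f(w)\le f(x)\wedge f(x')$. Weight preservation then delivers $x\cdot x'=|w|=|f(w)|\le|f(x)\wedge f(x')|=f(x)\cdot f(x')$, as required.

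I expect the crux to be this last step, and specifically the realization that applying $f$ to the overlap $w=x\wedge x'$ is exactly what converts the inclusion clause into the contraction inequality; this also explains structurally why plain dominance (without inclusion) is too weak to force a contraction map. The other load-bearing point, easy to overlook, is that centeredness is precisely what makes weight preservation available in the forward direction through $f(e)=e$ and $f(0)=0$; on a merely balanced pair the argument would already break at that first step.
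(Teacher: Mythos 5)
Your proposal is correct and follows essentially the same route as the paper's proof: weight preservation from the contraction inequality against the all-zeros (purifier) and all-ones (central party) occurrence vectors, the per-row dominance condition from contraction against $\leftocc_{(i)}$ plus balance, and—in the reverse direction—the key step of applying $f$ to the overlap $u\wedge u'$ and invoking inclusion preservation to get $u\cdot u'\le f(u)\cdot f(u')$. The only cosmetic difference is that you derive inclusion preservation via the triangle-inequality bound on Hamming distance where the paper chains dot-product inequalities, but these are the same computation.
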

\begin{proof}

\textbf{Inclusion dominance is contraction:} Let $f$ be an inclusion dominance map for $(\leftm,\rightm)$. Our claim is that $f$ is a contraction map. So we need to show that the map $f$ has the contraction property
\begin{equation}
     \ham{u}{u'} \geq \ham{f(u)}{f(u')}\,,
\end{equation}
for every pair of bit-strings $u,u'\in\{0,1\}^m$, and that it obeys the constraint
\begin{equation}
    f(\leftocc_{(i)}) = \rightocc_{(i)}.
\end{equation}
for every  $i\in[\N+1]$.

First we show that $f$ obeys the contraction property. Let $u, u' \in \{0,1\}^m$ be two arbitrary bit-strings. Consider the vector $u \wedge u'$ which, by definition, is a subset of both $u$ and $u'$. Therefore, since $f$ is an inclusion dominance map, $f(u \wedge u')$ must be a subset of both $f(u)$ and $f(u')$. This implies that 
\begin{equation}
    |f(u \wedge u')| \leq f(u) \cdot f(u') \, ,
\end{equation}
for otherwise the vector $f(u \wedge u')$ cannot have complete overlap with both $f(u)$ and $f(u')$. By virtue of $f$ being a dominance map, we have
\begin{equation}
    |u \wedge u'|=|f(u \wedge u')|
\end{equation}
and therefore
\begin{equation}
    u\cdot u' = |u \wedge u'|=|f(u \wedge u')| \leq f(u) \cdot f(u')\,.
\end{equation}
By  \eqref{eq:Haming-dot} and the property $|u| = |f(u)|$ of dominance maps, this is equivalent to\footnote{
    Note that the inequality changes direction in going from the dot-product to the Hamming-distance description, due to the negative sign in \eqref{eq:Haming-dot}.
} 
\begin{equation}
    \ham{u}{u'} \geq \ham{f(u)}{f(u')}\,,
\end{equation}
manifesting contraction as desired.

Next, we show that $f(\leftocc_{(i)}) = \rightocc_{(i)}$ for all $i\in[\N+1]$. 
Let $u = \leftocc_{(i)}$ with $i \in [\N + 1]$
in \eqref{eq:UVdomi}. Then
\begin{equation*}
    |\leftocc_{(i)}| = \leftocc_{(i)} \cdot \leftocc_{(i)} \leq f(\leftocc_{(i)}) \cdot \rightocc_{(i)} \leq |\rightocc_{(i)}|=|\leftocc_{(i)}|\,.
\end{equation*}
where the first inequality follows from the second property of dominance maps, the second inequality from the fact that these are $(0,1)$ vectors, and the last equality from balance. Thus $f(\leftocc_{(i)}) \cdot \rightocc_{(i)} = |\rightocc_{(i)}|$, which implies $f(\leftocc_{(i)})\geq \rightocc_{(i)}$. Combined with $|f(\leftocc_{(i)})|=|\leftocc_{(i)}|=|\rightocc_{(i)}|$, this proves that $f(\leftocc_{(i)})=\rightocc{(i)}$.

\paragraph{Contraction is inclusion dominance:} 
Let $f$ be a contraction map for $(\leftm,\rightm)$. We claim that $f$ is an inclusion dominance map, in other words that:
\begin{enumerate}[(1)]
    \item $|u| = |f(u)|$ for all $u\in\{0,1\}^m$, and
    \item $u\cdot \leftocc_{(i)} \leq f(u) \cdot \rightocc_{(i)}$ for all $u\in\{0,1\}^m$ and for all $i\in[\N+1]$, and
    \item $\forall u,u'$ with $u' \leq u$, we have  $f(u') \leq f(u)$.\end{enumerate}
Property $(1)$ follows from simultaneous contraction for two extremal bit strings, $0$ and $e$, similarly to the strategy employed in the proof of theorem \cref{thm:if-HEI-then-contraction}. Denoting the purifier by $p$, we write the following two contraction inequalities:

\begin{equation}
 \ham{u}{\leftocc_{(p)}} \ge \ham{f(u)}{f(\leftocc_{(p)})} \,,\qquad \ham{u}{\leftocc_{(1)}} \ge \ham{f(u)}{f(\leftocc_{(1)})} \,.
\end{equation}
The first one implies, by virtue of $\leftocc_{(p)}=
{0}$ and $f(\leftocc_{(p)}) = \rightocc_{(p)} =
{0}$, that
\begin{equation}\label{eq:leq-direction}
       |u|\ge |f(u)| \,,
\end{equation}
while the second one implies, by virtue of $\leftocc_{(1)}=e$ and $f(\leftocc_{(1)}) = \rightocc_{(1)} =e$, that
\begin{equation}\label{eq:geq-direction}
    -|u| + m\ge-|f(u)|+m \quad \Rightarrow\quad |u|\le|f(u)| \,.
\end{equation}
Combining \eqref{eq:leq-direction} with \eqref{eq:geq-direction} we obtain $| u | = | f(u) |$, proving property (1). 

To show property (2), we use the contraction property on $u$ and $\leftocc_{(i)}$,
\be
    |u-\leftocc_{(i)}| = |u| + |\leftocc_{(i)}| - 2 \, u \cdot \leftocc_{(i)}
    \ge
    |f(u) -f(\leftocc_{(i)})| = |f(u)| + |\rightocc_{(i)}| - 2 \, f(u) \cdot \rightocc_{(i)}\,.
\ee
    Using the above-proved property (1) and balance this immediately gives the desired result,
\be
    u\cdot \leftocc_{(i)} \leq f(u) \cdot \rightocc_{(i)} \, .
\ee

Finally, to show property (3), consider arbitrary $u'\leq u$.  Again using property (1) proven above, $|u|=|f(u)|$ and $|u'|=|f(u')|$, which along with contraction property then gives $u\cdot u' \leq f(u) \cdot f(u')$. Together we obtain
\begin{equation}
    |u'|=u' \cdot u \leq f(u) \cdot f(u') \leq |f(u')| = |u'|.
\end{equation}
Thus $f(u) \cdot f(u') = |f(u')|$ which means that $f(u')\leq f(u)$. This ends the proof that every contraction map for a centered inequality is an inclusion dominance map.
\end{proof}

An immediate corollary of this theorem is the following.
\begin{cor}\label{thm:if-centered-hei-then-dominance}
Let $(\leftm,\rightm) \in \cset$. A contraction map for $(\leftm,\rightm)$ is a dominance map for $(\leftm,\rightm)$.
\end{cor}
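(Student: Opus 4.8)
The plan is to derive this directly from Theorem \ref{thm:contraction-equals-inclusion-dom}, together with the definitional containment of inclusion dominance maps inside dominance maps. The key observation is that Definition \ref{def:inclusion-dominance-map} builds an inclusion dominance map on top of a dominance map: an inclusion dominance map is, by fiat, a dominance map $f$ that additionally respects inclusions ($u'\le u \Rightarrow f(u')\le f(u)$). Hence every inclusion dominance map is automatically a dominance map, simply by forgetting the extra monotonicity condition.

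Given this, the argument is essentially a one-line composition. First I would invoke Theorem \ref{thm:contraction-equals-inclusion-dom}, which asserts that for $(\leftm,\rightm)\in\cset$ a contraction map $f$ is an inclusion dominance map. Then I would apply the remark of the previous paragraph: since inclusion dominance maps are a special case of dominance maps, $f$ is in particular a dominance map. Concretely, unwinding the definitions, Theorem \ref{thm:contraction-equals-inclusion-dom} already establishes the two defining properties of a dominance map for $f$, namely $|u|=|f(u)|$ (property (1) in its proof) and $u\cdot\leftocc_{(i)}\le f(u)\cdot\rightocc_{(i)}$ for all $i$ (property (2), equivalent to $\leftm u^T\le\rightm f(u)^T$ via \eqref{eq:UVdomi}). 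These are exactly the requirements of Definition \ref{def:dominance-map}, so no further work is needed.

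There is no real obstacle here: the content of the corollary is entirely subsumed by Theorem \ref{thm:contraction-equals-inclusion-dom}, and the only thing to verify is the trivial set-theoretic fact that inclusion dominance maps form a subclass of dominance maps. The reason this corollary is worth stating separately is presumably downstream: it isolates the weaker implication (contraction $\Rightarrow$ dominance) that feeds into the chain toward PCM in Theorem \ref{thm:if-dominance-then-PCM}, for which the full strength of inclusion dominance is not required. I would therefore keep the proof to a single sentence citing Theorem \ref{thm:contraction-equals-inclusion-dom} and Definition \ref{def:inclusion-dominance-map}, resisting any temptation to re-derive properties (1) and (2), which are already available.
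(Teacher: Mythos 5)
Your proposal is correct and matches the paper's own proof, which likewise derives the corollary in one line from Theorem \ref{thm:contraction-equals-inclusion-dom} together with the fact that every inclusion dominance map is a dominance map by definition. No further commentary is needed.
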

\begin{proof}
This follows from Theorem \ref{thm:contraction-equals-inclusion-dom} since every inclusion dominance map is a dominance map by definition.    
\end{proof}

\paragraph{Counterexamples:} The converse of Corollary \ref{thm:if-centered-hei-then-dominance} does not hold: not every dominance map for a centered inequality is a contraction map, since, as we saw in Example \ref{eg:non-example-inclusion-dom}, not every dominance map is an inclusion dominance map.

A further corollary is the following.

\begin{cor}\label{cor:HEI-implies-dominance}
Let $(\leftm,\rightm)\in \cset$. If it is an HEI, then it obeys dominance.
\end{cor}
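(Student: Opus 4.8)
The plan is to chain together the equivalences already in hand, since this statement is a direct corollary of the machinery built up in the preceding theorem. The crucial input is the standing convention of the paper that an inequality being an HEI is synonymous with the existence of a contraction map for its matrix pair $(\leftm,\rightm)$ (following \cite{Bao:2015bfa, Bao:2025sjn}). So the very first step is to unpack the hypothesis: because $(\leftm,\rightm)$ is an HEI, it admits a contraction map $f\colon\{0,1\}^{\mL}\to\{0,1\}^{\mR}$ in the sense of Definition \ref{def:contr_map}.

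Next I would invoke the centeredness hypothesis $(\leftm,\rightm)\in\cset$ to upgrade this contraction map into a dominance map. This is exactly the content of Corollary \ref{thm:if-centered-hei-then-dominance}, which states that for a centered pair any contraction map is in fact a dominance map (itself an immediate consequence of Theorem \ref{thm:contraction-equals-inclusion-dom}, since a contraction map for a centered pair is an inclusion dominance map, hence \emph{a fortiori} a dominance map). Applying this to our $f$ yields that $f$ is a dominance map for $(\leftm,\rightm)$.

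Finally, by Definition \ref{def:dominance}, the mere existence of a dominance map is precisely what it means for $(\leftm,\rightm)$ to obey dominance, so the conclusion follows. There is essentially no obstacle here: the entire argument is a three-link implication, $\text{HEI}\Rightarrow\text{contraction map}\Rightarrow\text{dominance map}\Rightarrow\text{dominance}$, with all the real work having been done in Theorem \ref{thm:contraction-equals-inclusion-dom}. The only point requiring any care is to make sure the equivalence between ``HEI'' and ``admits a contraction map'' is explicitly being used, and that the centeredness hypothesis is what licenses the passage from contraction to dominance; without centeredness the second implication can fail.
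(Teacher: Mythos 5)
Your proof is correct and follows exactly the paper's (implicit) route: HEI $\Rightarrow$ contraction map exists $\Rightarrow$ it is a dominance map by Corollary \ref{thm:if-centered-hei-then-dominance} (via Theorem \ref{thm:contraction-equals-inclusion-dom}) $\Rightarrow$ dominance holds by Definition \ref{def:dominance}. Nothing further is needed.
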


Again, the converse is not true: a centered inequality obeying dominance need not be an HEI. A counterexample is given by the inequality from Example \ref{eg:non-example-inclusion-dom}, which is centered and obeys dominance but fails inclusion dominance.  By Theorem \ref{thm:contraction-equals-inclusion-dom}, this implies the non-existence of a contraction map.  We can also see more directly that it is not an HEI, being violated by the entropy vector of the following graph (with all edges having unit weight):
\[
\begin{tikzpicture}[scale=0.6,
    every node/.style={circle, draw, inner sep=1pt},
    dot/.style={circle, fill=black, inner sep=2.5pt, draw=none},
    txt/.style={inner sep=0pt, outer sep=1pt, draw=none}
]
    \node (1) at (0,1.5) {1};
    \node (2) at (0,0)   {3};
    \node (3) at (0,-1.5){5};

    \node[dot] (V) at (1.3,0) {};

    \node (4) at (4,1.5) {2};
    \node (5) at (4,0)   {4};
    \node (6) at (4,-1.5){6};

    \node[dot] (W) at (2.7,0) {};

    \draw (1) -- (V);
    \draw (2) -- (V);
    \draw (3) -- (V);

    \draw (4) -- (W);
    \draw (5) -- (W);
    \draw (6) -- (W);

    \draw (V) -- (W);
\end{tikzpicture}
\]

\subsection{Dominance implies PCM}
\label{ssec:dom_PCM}

Here we prove that for centered inequalities, dominance implies positive-combinations majorization (cf.~\cref{def:PCM}).

\begin{thm}\label{thm:if-dominance-then-PCM}
Let $(\leftm,\rightm)\in \cset$. If $\leftm$ is dominated by $\rightm$ then $\leftm \preceq^{\rm PC} \rightm$.
\end{thm}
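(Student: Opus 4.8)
The plan is to verify the majorization relation $v\leftm \preceq v\rightm$ for each fixed $v \in \mathbb{R}^{\N+1}_+$ directly from the definition \eqref{eq:majdef}, feeding it the combinatorial (subset) form of dominance \eqref{eq:UVdom}. Since $(\leftm,\rightm)\in\cset$ is centered we have $\mL=\mR=m$, so $v\leftm$ and $v\rightm$ both lie in $\mathbb{R}^m$ and their majorization is well-defined. The one external tool I would invoke is the standard fact that the sum of the $k$ largest entries of a vector $x\in\mathbb{R}^m$ equals $\max_{|U|=k}\sum_{n\in U}x_n$. With it, \eqref{eq:majdef} is equivalent to the total-sum equality together with the ``top-$k$'' inequalities $\max_{|U|=k}\sum_{n\in U}(v\leftm)_n \le \max_{|V|=k}\sum_{n\in V}(v\rightm)_n$ for $k=1,\dots,m-1$.

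First I would record the elementary regrouping identity $\sum_{n\in U}(v\leftm)_n = v\,\big(\leftm^{(U)}e^T\big)$, and similarly for $\rightm$, which just rewrites $\sum_{n\in U}\sum_i v_i L_{in}$ by party. The total-sum case ($k=m$) is then immediate from balance, since a centered pair is balanced: $\sum_n (v\leftm)_n = \sum_i v_i\,|\leftocc_{(i)}| = \sum_i v_i\,|\rightocc_{(i)}| = \sum_n(v\rightm)_n$, using $|\leftocc_{(i)}|=|\rightocc_{(i)}|$ in the middle step.

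For the top-$k$ inequalities, I would fix $k$ and choose a size-$k$ subset $U\subseteq[m]$ achieving $\max_{|U|=k}\sum_{n\in U}(v\leftm)_n$, i.e.\ the $k$ largest entries of $v\leftm$. By dominance applied to $U$ in the form \eqref{eq:UVdom}, there is a dominating subset $V\subseteq[m]$ with $|V|=|U|=k$ and $\leftm^{(U)}e^T \le \rightm^{(V)}e^T$ componentwise. Because $v\ge 0$, multiplying this componentwise inequality on the left by $v$ preserves its direction, giving $\sum_{n\in U}(v\leftm)_n = v\,\big(\leftm^{(U)}e^T\big) \le v\,\big(\rightm^{(V)}e^T\big) = \sum_{n\in V}(v\rightm)_n \le \max_{|V'|=k}\sum_{n\in V'}(v\rightm)_n$. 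This is exactly the desired top-$k$ inequality; combined with the total-sum equality it yields $v\leftm\preceq v\rightm$, and since $v\in\mathbb{R}^{\N+1}_+$ was arbitrary, $\leftm\preceq^{\rm PC}\rightm$.

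The step carrying the real content is the third one: recognizing that the combinatorial dominance condition is precisely tailored to the ``sum of $k$ largest'' description of majorization, so that optimizing $U$ on the left and transferring through the dominating $V$ on the right loses no slack. Everything else is bookkeeping, and the only hypotheses genuinely used are centeredness (for $\mL=\mR$ and balance) and the existence of a dominance map. I do not expect a serious obstacle; the one point requiring care is the \emph{direction} of the inequality, which is preserved only because $v$ is nonnegative and we are comparing top-$k$ sums on both sides.
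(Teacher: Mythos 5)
Your proposal is correct and follows essentially the same route as the paper's proof: both express the sum of the $k$ largest entries of $v\leftm$ as a maximum over size-$k$ subsets (bit-strings $u$ with $|u|=k$ in the paper's notation), transfer through the dominating subset using $v\ge 0$, bound by the maximum over all size-$k$ subsets on the right, and invoke balance for the total-sum equality. No gaps.
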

\begin{proof}
Let us prove for arbitrary $v \in \R^{\N+1}_+$ that $v\leftm \preceq v\rightm$. Recall that $\leftm$ and $\rightm$ have the same dimensions, $(\N+1) \times m$ and 
observe that for any $1 \leq k \leq m$ any sum of $k$ coordinates of $v\leftm$ is given by $v\leftm u^T$ for an appropriate bit string $u \in \{0, 1\}^{m}$ with $|u|=k$. Therefore, the sum of the $k$ largest ones is
\begin{equation}
    \sum_{n=1}^k (v \leftm)^\downarrow_n = \max_{u\in B_k}
    v\leftm u^T\,,
\end{equation}
where
\be
B_k:=\left\{u\in\{0,1\}^m:|u|=k\right\}.
\ee
The same is true for $v\rightm$. Recall that for a dominance map $f$ we have $\leftm u^T\le\rightm f(u)^T$. Moreover, since $v \geq 0$, we also have
\begin{equation}
    v\leftm u^T \leq v\rightm f(u)^T\,.
\end{equation}
Thus we obtain
\begin{equation}
    \sum_{n=1}^k (v \leftm)^\downarrow_n = \max_{u\in B_k}
    v\leftm u^T \leq \max_{u\in B_k}
    v\rightm f(u)^T \leq \max_{u\in B_k}
    v\rightm u^T = \sum_{n=1}^k (v \rightm)^\downarrow_n\,,
\end{equation}
where the second inequality holds because we are taking the maximum over a larger set: 
\begin{equation}
\{f(u) : u \in B_k\}\subseteq B_k\,.
\end{equation}
Finally, balance ensures that $\sum\limits_{n=1}^{m} (v \leftm)_n = \sum\limits_{n=1}^{m} (v \rightm)_n.$
\end{proof}
Note that, in the proof, we did not use the existence of rows of all 0s or all 1s, so the above statement is true for arbitrary balanced $(0,1)$-matrices of the same size.

\paragraph{Counterexample to Conjecture \ref{conj:2}:} The counterexample to the converse of Theorem \ref{thm:if-HEI-then-contraction} given in subsec.\ \ref{sssec:converses} is a superbalanced inequality that is not an HEI, yet all of its null reductions are HEIs. Via Theorems \ref{thm:contraction-equals-inclusion-dom} and \ref{thm:if-dominance-then-PCM}, those null reductions therefore obey PCM, falsifying Conjecture \ref{conj:2}. 

One may wonder whether a weakening of Conjecture \ref{conj:2}, analogous to Conjecture \ref{conj:4prime}, might nonetheless be true. Specifically, if all its null reductions for all choices of purifier obey PCM, is the pair $(\leftm,\rightm)\in\sbset$ necessarily an HEI? (This is ${\N+1}\choose{2}$ null reductions, versus $\N$ null reductions with a fixed purifier as in Conjecture \ref{conj:2}.) It turns out the answer is no. Here is a counterexample, an $\N = 6$ superbalanced non-HEI,\footnote{The inequality \eqref{eq:ce-c2} is not an HEI because it is the 35th HEI in the $\N=6$ list in \cite{hecdata} from which we subtracted MMI on parties \{2\}, \{4,6\}, and \{5\}.} for which all 21 null reductions obey PCM:
\begin{multline}\label{eq:ce-c2}
    \ent{5} + \ent{13} + \ent{14} + \ent{135} + \ent{156} + \ent{235} + \ent{345} + \ent{356} + \ent{1245} \\ 
    \geq \ent{1} + \ent{1} + \ent{3} + \ent{4} +\ent{25}+ \ent{35} + \ent{35} +  \ent{56} \\ + \ent{145} + \ent{1356} + \ent{12345}\,.
\end{multline}

\paragraph{Counterexample to converse of Theorem \ref{thm:if-dominance-then-PCM}:} The converse of Theorem \ref{thm:if-dominance-then-PCM} does not hold, as can be explicitly checked by the following counterexample. Define the following matrices:
\begin{equation}
    \label{01counterexample}
\mathbf{A} = \begin{pmatrix}
1 & 1 & 0 \\
1 & 1 & 0 \\
1 & 0 & 1 \\
1 & 0 & 1 \\
0 & 1 & 1 \\
0 & 1 & 1
\end{pmatrix},\qquad \mathbf{B} = \begin{pmatrix}
1 & 1 & 1 & 0 & 0 & 1 & 0 & 0 \\
1 & 0 & 0 & 1 & 1 & 1 & 0 & 0 \\
1 & 1 & 0 & 1 & 0 & 0 & 1 & 0 \\
1 & 0 & 1 & 0 & 1 & 0 & 1 & 0 \\
1 & 1 & 0 & 0 & 1 & 0 & 0 & 1 \\
1 & 0 & 1 & 1 & 0 & 0 & 0 & 1
\end{pmatrix}.
\end{equation}
From these two matrices, define $(\leftm,\rightm)$:

\begin{equation}
    \leftm = \begin{pmatrix}
        \mathbf{A}\, & \mathbf{I}\, & \, \mathbf{I}\\
        1&\cdots  &1\\
        0&\cdots &0
    \end{pmatrix},\qquad \rightm = \begin{pmatrix}
        \mathbf{B} & & \mathbf{O}\\
        1& \cdots &1\\ 
        0& \cdots &0
    \end{pmatrix},
\end{equation}
where $\mathbf{I}$ is the $6\times 6$ identity matrix and $\mathbf{O}$ is a $6\times7$ block of 0s.  $(\leftm,\rightm)$ is centered: the matrices have equal numbers of rows (8) and columns (15), 
equal row sums, and a row of all 1s.

Let us show that $\leftm$ is not dominated by $\rightm$. Consider $U = \{1, 2, 3\}$. Then 
\begin{equation}
    \leftm^{(U)}e^T 
= (2,2,2,2,2,2,3,0)^T\,.
\end{equation}
Observe that there is no $V$ of size $3$ such that $\leftm^{(U)}e^T \leq \rightm^{(V)}e^T$. Indeed, the best we can do is to have $1 \in V$, but then every other pair of columns of $\rightm$ has one position (other than the last row) where both entries are zero, thus failing to dominate the corresponding position of $\leftm^{(U)}e^T$ which is $2$.

On the other hand, $\leftm \preceq^{\rm PC} \rightm$, as we now show. It can be checked directly that the $U$ above is the only set that fails the dominance property. Therefore, in order to show PCM in the spirit of Theorem \ref{thm:if-dominance-then-PCM}, it is sufficient to ensure that for any $v \in \mathbb{R}^8_{+}$ there exists a set $V$ of size $3$ such that $v\leftm^{(U)}e^T \leq v\rightm^{(V)}e^T$. Indeed, we can take $V = \{1,2,3\}$ if $v_1 \geq v_2$, and $V = \{1,4,5\}$ otherwise.

\subsection{PCM implies BCM}\label{ssec:pcm-bcm}

It is clear from the definitions that positive-combinations majorization $\preceq^{\rm PC}$ implies binary-combinations majorization $\preceq^{\rm BC}$, since the conditions required for BCM (\cref{def:BCM}) is a strict subset of the conditions required for PCM (\cref{def:PCM}). 
\paragraph{Counterexample to BCM implying PCM:} The converse is not true even for $(0, 1)$-matrices, as can be explicitly checked by the following counterexample. Define the following matrices:
\begin{equation}
   \mathbf{A}  = \left(\begin{array}{ccc}
 1 & 1 & 0\\
 1 & 1 & 0\\
 1 & 0 & 1\\
 1 & 0 & 1\\
 0 & 1 & 1\\
 0 & 1 & 1
\end{array}\right) ,\qquad \mathbf{B} = \left(\begin{array}{cccccccccccc}
 1 & 1 & 0 & 0 & 1 & 0\\
 1 & 1 & 1 & 0 & 0 & 1\\
 1 & 1 & 1 & 1 & 0 & 0\\
 1 & 0 & 1 & 0 & 1 & 0\\
 1 & 1 & 0 & 1 & 0 & 1\\
 1 & 0 & 0 & 1 & 0 & 1
\end{array}\right).
\end{equation}
From these two matrices, define $(\leftm,\rightm)$:
\begin{equation}
    \leftm = \begin{pmatrix}
        \mathbf{A}\, & \mathbf{I}\, & \, a_2 & \, a_3 & \, a_5\\
        1&\cdots & \cdots & \cdots &1\\
        0&\cdots & \cdots & \cdots &0
    \end{pmatrix},\qquad \rightm = \begin{pmatrix}
        \mathbf{B} & & \mathbf{O}\\
        1& \cdots &1\\ 
        0& \cdots &0
    \end{pmatrix},
\end{equation}
where $\mathbf{I}$ is the $6\times 6$ identity matrix, $\mathbf{O}$ is a $6\times 6$ block of 0s, and $a_i$ is a column of size $6$ with $1$ in row $i$ and $0$s everywhere else. $(\leftm,\rightm)$ is centered: the matrices have equal numbers of rows (8) and columns (12), equal row sums, and a row of all 1s.

It can be checked by enumeration of all $(0,1)$ vectors that $\leftm \preceq^{\rm BC} \rightm$. On the other hand, $\leftm \not\preceq^{\rm PC} \rightm$, since for $v = \begin{pmatrix}
    2 & 1 & 1 & 3 & 1 & 3 & 0 & 0
\end{pmatrix}$ \begin{equation}
    v\leftm = \left(\begin{array}{cccccccccccc}
7 & 7 & 8 & 2 & 1 & 1 & 3 & 1 & 3 & 1 & 1 & 1
\end{array}\right) \not\preceq \left(\begin{array}{cccccccccccc}
11 & 5 & 5 & 5 & 5 & 5 & 0 & 0 & 0 & 0 & 0 & 0
\end{array}\right)=v\rightm\,,
\end{equation}
which can be observed by looking at the sum of the $3$ largest entries.

\subsection{BCM implies region dominance}\label{ssec:bcm-rd}

Here we prove that for centered inequalities, binary-combinations majorization implies region dominance.

\begin{thm}\label{thm:if-bcm-then-rd}
Let $(\leftm,\rightm) \in \cset$. If $\leftm  \preceq^{\rm BC} \rightm$ then $\leftm$ is region dominated by $\rightm$.
\end{thm}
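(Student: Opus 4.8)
The plan is to translate both the hypothesis $\leftm\preceq^{\rm BC}\rightm$ and the desired conclusion into statements about the single integer vectors $v\leftm$ and $v\rightm$, where $v\in\{0,1\}^{\N+1}$ is the indicator of the region $X$, and then read off region dominance from the very top of the majorization ordering.

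First I would fix a region $X\subseteq[\N+1]$, set $k:=|X|$, and let $v$ be its indicator vector. The key observation is that $(v\leftm)_n=\sum_{i\in X}L_{in}=|X\cap X_n|\le k$, with equality exactly when the $n$th LHS term contains all of $X$. Hence the number of entries of $v\leftm$ equal to $k$ is precisely $\big|\bigwedge_{i\in X}\leftocc_{(i)}\big|$, and similarly the number of entries of $v\rightm$ equal to $k$ is $\big|\bigwedge_{i\in X}\rightocc_{(i)}\big|$. So the region-dominance inequality \eqref{eq:region-dominance} for this $X$ is equivalent to the statement that $v\leftm$ has no more maximal ($=k$) entries than $v\rightm$ does.

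Next I would invoke the BCM hypothesis, which gives $v\leftm\preceq v\rightm$. Both vectors have length $m=\mL=\mR$ (centeredness) and entries in $\{0,1,\dots,k\}$. Writing $p$ and $q$ for their respective numbers of $k$-valued entries, suppose toward a contradiction that $p>q$. Since $k$ is the maximal entry of $v\leftm$, the sum of its $p$ largest entries equals exactly $pk$; on the right, at most $q$ entries equal $k$ and the remaining $p-q$ are each at most $k-1$, so the sum of the $p$ largest entries of $v\rightm$ is at most $qk+(p-q)(k-1)=pk-(p-q)<pk$. This violates the majorization inequality $\sum_{n=1}^{p}(v\leftm)^\downarrow_n\le\sum_{n=1}^{p}(v\rightm)^\downarrow_n$, so in fact $p\le q$, which is exactly \eqref{eq:region-dominance}. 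Balance is automatic from $(\leftm,\rightm)\in\cset$, disposing of the remaining clause in the definition of region dominance.

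There is essentially no hard obstacle here; the only thing to get right is the dictionary in the first step identifying the region-dominance counts with the counts of maximal-valued entries of the contracted vectors. Once that is in place, a single top-$p$ partial-sum comparison suffices. I would, however, double-check the degenerate cases: if $X$ contains the purifier or $X=\emptyset$, the maximal attainable value on each side drops below $k$, both counts vanish, and the inequality holds trivially, so these cause no trouble.
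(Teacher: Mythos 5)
Your proof is correct and follows essentially the same route as the paper's: both use the indicator bit-string of $X$ as the combination vector, observe that an entry of $v\leftm$ equals $|X|$ exactly when the corresponding column contains $X$, and compare top partial sums in the majorization (the paper argues directly with the count of LHS maximal entries where you argue by contradiction, but the mechanism is identical).
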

\begin{proof}
We will in fact show that this statement holds for arbitrary $(0, 1)$-matrices $\leftm, \rightm$ of the same size; in other words, we will not use the existence of a row of all 0s or a row of all 1s. If $\leftm \preceq^{\rm BC} \rightm$, then considering every 
unit vector for the role of $v$ in in Definition \ref{def:BCM}, we obtain that $(\leftm,\rightm)$ is balanced; in other words, balance on party $i$ follows from the majorization condition with $v_j=\delta_{ij}$.

Now let us prove region dominance \eqref{eq:region-dominance} for arbitrary $X \subseteq [\N+1]$. Namely, we prove that the number of columns of $\leftm$ whose support contains $X$ doesn’t exceed that of $\rightm$. Consider $k = \Big| \bigwedge_{i \in X} \leftocc_{(i)} \Big|$, i.e.\ the number of terms on the LHS containing $X$, in \eqref{eq:majdef}. If $k = 0$, then there is nothing to prove. Let $x$ be the index bit-string for $X$, that is, 
$x_i=1$ if $i\in X$ and 0 otherwise. 
Notice that $x\leftm^{(n)} \leq |X|$ for any column $\leftm^{(n)}$ and the equality is achieved if and only if the support of $\leftm^{(n)}$ contains $X$. The same is true for $\rightm$. Then, by Definition  \ref{def:BCM}, we obtain
\begin{equation}
k|X|=\sum_{n=1}^k\limits(x\leftm)^\downarrow_n
\leq\sum_{n=1}^k\limits (x\rightm)^\downarrow_n\,,
\end{equation}
implying that the number of columns in $\rightm$ whose support contains $X$ is at least $k$.
\end{proof}

The converse for this also does not hold, 
as shown by the following counterexample.
\begin{eg}
    Consider \begin{equation}
        \leftm = \begin{pmatrix}
1 & 1 & 1 & 1  \\
1 & 1 & 0 & 0 \\
1 & 1 & 0 & 0 \\
1 & 0 & 1 & 0 \\
1 & 0 & 1 & 0 \\
0 & 1 & 1 & 0 \\
0 & 1 & 1 & 0 \\
0 & 0 & 0 & 0
\end{pmatrix},\quad \rightm = \begin{pmatrix}
    1 & 1 & 1 & 1\\
    1 & 1 & 0 & 0 \\
1 & 1 & 0 & 0 \\
1 & 0 & 1 & 0 \\
1 & 0 & 1 & 0 \\
1 & 0 & 0 & 1 \\
1 & 0 & 0 & 1\\
0 & 0 & 0 & 0
\end{pmatrix}.
    \end{equation}
Then $(\leftm, \rightm)\in \cset$. Observe that $\leftm \not\preceq^{\rm BC} \rightm$ since for $v = \begin{pmatrix}
        0 & 1 & 0 & 1 & 0 & 1 & 0 & 0
    \end{pmatrix}$
\begin{equation}
    v\leftm = \begin{pmatrix}
        2 & 2 & 2 & 0
    \end{pmatrix} \not\preceq \begin{pmatrix}
        3 & 1 & 1 & 1
    \end{pmatrix}= v \rightm\,.
\end{equation}
However, $(\leftm, \rightm)$ satisfies region dominance.
\end{eg}

\subsection{Balanced HEIs obey region dominance}\label{ssec:bhei-rd}

Although our primary focus has been on centered inequalities, here we prove more directly that every balanced HEI (not necessarily centered) is region dominant. A notable feature of this proof is that, unlike in the rest of the paper, we directly use the definition of an HEI as an inequality that holds for the min-cut function on an arbitrary graph, rather than relying on the existence of a contraction map.

\begin{figure}
    \centering
    \includegraphics[width=0.85\linewidth]{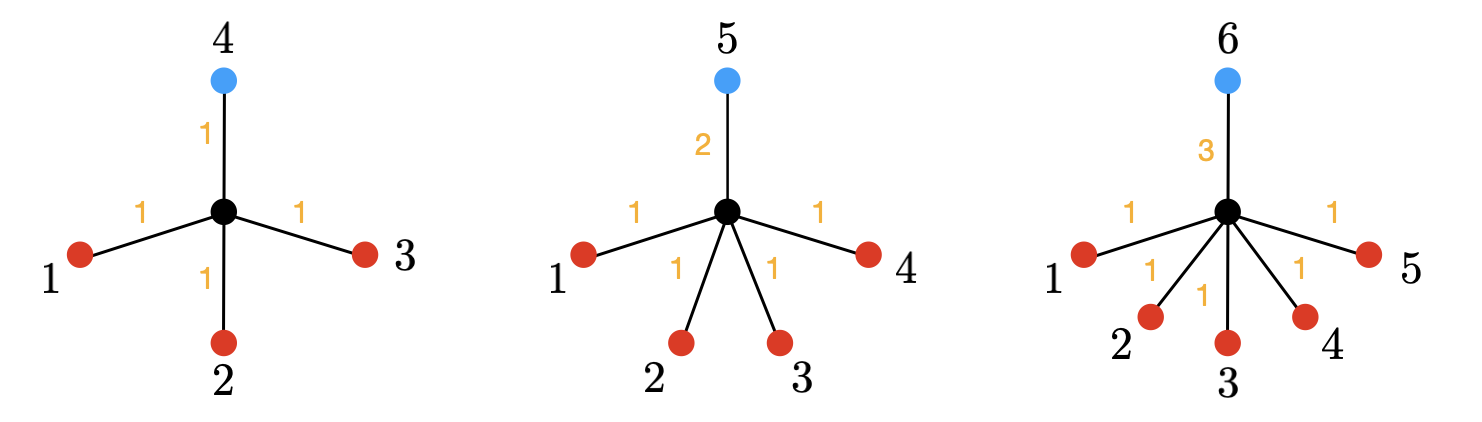}
    \caption{Examples of star graphs for $k = 3,4,5$ respectively, used in the proof of \cref{thm:if-balanced-hei-then-region-dominance}, with the edge weights shown in orange.  When $k\neq \mathsf{N}$, all the remaining vertices are isolated and disconnected (and hence omitted from the diagram for visual clarity).}
    \label{fig:stargraphs}
\end{figure}

\begin{thm}\label{thm:if-balanced-hei-then-region-dominance}
Let $(\leftm,\rightm) \in \bset$ be an HEI. Then $\leftm$ is region-dominated by $\rightm$.
\end{thm}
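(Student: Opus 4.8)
The plan is to prove region dominance directly from the defining property of an HEI — that it holds for the min-cut function of every weighted graph — by evaluating the inequality on a single cleverly weighted star graph, built separately for each region $X$. Since $(\leftm,\rightm)$ is balanced by hypothesis, the balance clause of region dominance is automatic, so it remains to establish $\big|\bigwedge_{i\in X}\leftocc_{(i)}\big|\le\big|\bigwedge_{i\in X}\rightocc_{(i)}\big|$ for each $X$. First I would dispose of the trivial cases: if $X$ contains the purifier $p$ then $\leftocc_{(p)}=\rightocc_{(p)}=0$ and both sides vanish, and if $|X|=1$ the claim is exactly balance; so I may assume $X\subseteq[\N]$ with $k:=|X|\ge 2$. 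The whole point is to manufacture a graph whose min-cut function, summed against the columns of $\leftm$ and $\rightm$, isolates precisely the number of terms containing $X$.

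Next I would construct the star graph $G_X$ (cf.\ \cref{fig:stargraphs}): a single internal vertex $v$ joined to each party $i\in X$ by an edge of weight $1$, together with one extra edge from $v$ to the purifier $p$ of weight $k-1$, all remaining vertices isolated. For $Y\subseteq[\N]$, writing $a:=|X\cap Y|$, the only freedom in the min cut is which side $v$ is assigned to: placing $v$ with $Y$ costs $|X\setminus Y|+(k-1)=2k-1-a$, while placing it opposite costs $|X\cap Y|=a$, so $\ent{Y}=\min\bigl(a,\,2k-1-a\bigr)$. A one-line check shows this equals $a$ for $a\le k-1$ and equals $k-1$ for $a=k$, i.e.\ $\ent{Y}=|X\cap Y|-\mathbb{1}[X\subseteq Y]$.

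Finally I would feed this entropy vector into the HEI $\sum_n\ent{X_n}\ge\sum_n\ent{Y_n}$. The linear piece contributes $\sum_n|X\cap X_n|=\sum_{i\in X}|\leftocc_{(i)}|$ on the left and $\sum_{i\in X}|\rightocc_{(i)}|$ on the right, which are equal by balance and hence cancel; what survives is exactly $-\#\{n:X\subseteq X_n\}\ge-\#\{n:X\subseteq Y_n\}$, which is region dominance. The main obstacle — and the only genuinely creative step — is finding the right graph. A plain star with unit spokes gives $\ent{Y}=\min(|X\cap Y|,|X\setminus Y|)$, which vanishes both when $X\subseteq Y$ and when $X\cap Y=\emptyset$, and therefore cannot detect full containment at all. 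The device that breaks this symmetry is the single purifier edge of weight $k-1$, tuned so that the kink in the $\min$ lands exactly between $a=k-1$ and $a=k$; verifying that this weight produces the clean indicator $\mathbb{1}[X\subseteq Y]$, with no residual dependence on partial overlaps, is the heart of the argument.
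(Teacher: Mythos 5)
Your proof is correct and follows essentially the same route as the paper's: evaluate the HEI on a star graph over $X\cup\{p\}$ with a heavier purifier spoke, then cancel the linear part using balance. The only difference is that you weight the purifier edge $k-1$ rather than the paper's $k-2$, so a single construction (with deficit $1$ instead of $2$ at full containment) covers $k=2$ and $k\ge 3$ uniformly, whereas the paper handles $k=2$ separately with a unit-weight star.
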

\begin{proof}
We seek to prove that $(\leftm,\rightm)$ is region dominant for any $k$-subset of $[\N+1]$, for any $2 \leq k \leq \mathsf{N}$. (The $k = 1$ case is ensured by the assumption that the HEI is balanced.) 

We begin by considering $k \geq 3$; the case for $k = 2$ will be taken care of at the end. Recall the definition of region dominance in \eqref{eq:region-dominance}, where $X$ in the definition is an arbitrary collection of parties. Without loss of generality, permute the parties so that $X \coloneq \{123\cdots k\}$ along with the appropriate permutation of $(\leftm, \rightm)$.  Consider a star graph on the set of vertices $V = \{1,2,3,\dots,k,\N+1\}$, where we pick the edge associated to the vertex $\N+1$ to be the heavier one with weight $k-2$. (See fig.\ \ref{fig:stargraphs} for some examples.)
    
In the following, $\alpha_\ell$ is any $\ell$-subset (including the 0-subset $\emptyset$) of $\{1,2,3,\dots, k\}$ and $\{\dots\}$ denotes any subset of $\{k+1,\dots,\N\}$. For any given subsystem, $X_\ell = \alpha_\ell \cup \{\dots\}$, the min-cut function on such graph takes the following values:
\begin{equation}\label{eq:min-cut-star-graph}
    \ent{X_\ell} = \min(\ell, 2k - \ell - 2) \, .
\end{equation}
Then, evaluating the inequality on such graph we obtain
\begin{equation}
    \sum_{\alpha_1} x_{\alpha_1} + 2\left(\sum_{\alpha_2} x_{\alpha_2}\right) + 3\left(\sum_{\alpha_3} x_{\alpha_3}\right) + \cdots + (k-2) x_{\{123\cdots k\}} \geq (x\leftrightarrow y)
\end{equation}
where $x_{\alpha}$ is the number of LHS terms with form $\alpha \cup \{\dots\}$, and similarly for $y_\alpha$ but for the RHS. (Note that the form of \eqref{eq:min-cut-star-graph} implies that for $\ell \geq k-1$ the value of the entropy starts to decrease linearly, which is why the last coefficient in the above is $k-2$ compared to just $k$.) However, balance implies that, for any $\ell \in [k]$,
\begin{equation}
    x_{\{\ell\}} +  \sum_{\alpha_2 \ni \ell} x_{\alpha_2} + \sum_{\alpha_3 \ni \ell} x_{\alpha_3} + \cdots + x_{\{123\cdots k\}} = y_{\{\ell\}} +  \sum_{\alpha_2 \ni \ell} y_{\alpha_2} + \sum_{\alpha_3 \ni \ell} y_{\alpha_3} + \cdots + y_{\{123\cdots k\}}\,.
\end{equation}
Summing them, we obtain
\begin{equation}
    \sum_{\alpha_1} x_{\alpha_1} + 2\left(\sum_{\alpha_2} x_{\alpha_2}\right) + 3\left(\sum_{\alpha_3} x_{\alpha_3}\right) + \cdots + k \,x_{\{123\cdots k\}} = (x \leftrightarrow y)\,.
\end{equation}
And subtracting it from the inequality above we get the desired property
\begin{equation}
    x_{\{123\cdots k\}} \leq y_{\{123\cdots k\}}\,.
\end{equation}
This proves the statement for $k\geq 3$.

For $k = 2$, we also consider the star graph on vertices $V = \{1,2,3\}$ with all edges unit weight. Otherwise, the proof is identical to the above. 
\end{proof}

\paragraph{Counterexample to the converse:} It's easy to see that the converse direction does not hold. A counterexample is provided by Example \ref{eg:eg-region-dominance}. The example obeys region dominance (as can be explicitly checked) but it is not a balanced HEI. Indeed, the inequality of Example \ref{eg:eg-region-dominance} can be obtained by taking the following primitive $\N = 6$ HEI,\footnote{This HEI is the 145th in the $\N=6$ list in \cite{hecdata}.}
\begin{multline}
    \ent{123} + \ent{124} + 
    \ent{134}+
    \ent{136} + \ent{145} +
    \ent{156} +
    \ent{236} + \ent{245} +
    \ent{3456}
    \geq\\
    \ent{1} +
    \ent{2} + \ent{13} + \ent{14} + \ent{36} + \ent{45} +
    \ent{56} +
    \ent{234} + \ent{1236} + \ent{1245}+
    \ent{13456}\,,
\end{multline}
and subtracting from it the HEI
\begin{equation}
    \ent{134} + \ent{3456} + \ent{156} \geq \ent{1} + \ent{34} + \ent{56} + \ent{13456}\,,
\end{equation}
which is MMI involving \{1\}, \{3,4\}, and \{5,6\}. The resulting inequality is the one given in Example \ref{eg:eg-region-dominance}. Being the difference of two primitive HEIs, it cannot be an HEI.

\section{Discussion}\label{sec:discussion}

\subsection{Summary}
\label{ssec:discussion-summary}

In this work we have examined combinatorial properties of holographic entropy inequalities, focusing primarily on the centered ones, which can be obtained by null reduction of superbalanced HEIs.  Such inequalities are balanced and have the same number of terms on both sides of the inequality.  Curiously, they admit a number of rather intriguing properties with nested relational structure between them, summarized in \cref{fig:logic-map}.
We have seen that non-negativity of the corresponding information quantity is equivalent to inclusion dominance, which through a chain of increasingly weak properties (dominance, PCM, and BCM) implies region dominance. In a sense, these properties restrict and quantify the building blocks of such HEIs.  These entail composing individual parties to build up larger subsystems on the one hand, and composing individual terms (each corresponding to a specific subsystem) to build up the HEI on the other.  Expressed in terms of our $(0,1)$ matrices $(\leftm,\rightm)$, the former combines the rows (each row describing a single party), whereas the latter combines the columns (each column describing a single term).  The next two paragraphs discuss the corresponding properties from a more intuitive standpoint.

\paragraph{Dominance properties:}
At the bottom of the implication ladder, namely \emph{region dominance}, one takes into account all the terms in the HEI, but only a given subset of parties. More precisely, for region dominance one considers a specific multi-party subsystem (i.e.\ a ``region'') and compares the number of terms containing it on the two sides of the inequality.  In the matrix language, for a fixed subset of rows of $\leftm$ and $\rightm$, one compares the number of columns which have 1 in each of these rows. For \emph{dominance}, on the other hand, one considers the total occurrence of individual parties in a subset of terms; picking a subset of columns of $\leftm$, one seeks an equal-sized subset of columns of $\rightm$ having at least as many occurrences of each party, i.e.\ at least as many 1's in each row. Dominance however doesn't care about the distribution of the parties across terms, which in turn comes at the top of the ladder, namely \emph{inclusion dominance}.  This condition requires the parties to be distributed  in such a way as to produce a nested structure of dominance, suggestive, in a sense, of ``building up'' the HEI.  

\paragraph{Majorization properties:}
Let us now turn to the matrix majorization properties necessary for dominance and sufficient for region dominance.  Historically, the entry point \cite{Grimaldi:2025jad} into the combinatorial properties came through majorization relation between two positive vectors (pertaining to configurations with all regions localized on a light cone; see below for further discussion of the implications of these light-cone configurations).  This is equivalent to \emph{positive-combinations majorization} (PCM), and in fact constitutes the only non-purely-combinatorial criterion formulated here. PCM is the property indicated in  Conjecture \ref{conj:1}, ensured by null-reducing any HEI.  It might therefore seem a pity that we have disproved the converse statement, Conjecture \ref{conj:2}, since this quenches the original hope in \cite{Grimaldi:2025jad}  that understanding the light-cone configurations for all choices of central regions would suffice to ``bootstrap'' to fully general (spacelike) configurations.  However, one may seek to turn this into a virtue, by characterizing the reason for the failure of Conjecture \ref{conj:2} and using it to formulate the missing piece for the putative bootstrap.

The weaker majorization property, namely \emph{binary-combinations majorization} (BCM), was initially formulated simply as a mathematically interesting extension, which at first sight might seem to have little physical significance.  However, it forms the glue, or transition, between the column behavior and the row behavior.  This transition does not occur through individual matrix elements which are perhaps too ``atomistic'', but rather through combining subsets of both collections.  More specifically, for any given subsystem $Z$, we construct a vector of sizes of overlaps $\abs{Z \cap X_n}$ with the individual terms $X_n$ on the LHS and require it to be majorized by the corresponding vector $\abs{Z \cap Y_n}$ for the RHS terms.  In other words, here we are retaining some information regarding not just the individual parties (through choice of $Z$), but also regarding the individual terms (through the components $n$).  Region dominance is then merely a restriction of this statement to the maximal overlap size $\abs{Z}$: majorization dictates that the RHS must admit at least as many such terms as the LHS.

\subsection{Future directions}

\subsubsection{Centered inequalities}

We have seen that the class of centered inequalities possesses many interesting features. A number of questions specifically pertaining to such inequalities are raised by our work.

\paragraph{Nesting:}
We have seen that inclusion dominance can be thought of as dominance supplemented by a certain nesting (or inclusion) structure.  Given the above organizational viewpoint,  one might wonder whether one could analogously formulate a further natural extension of our implication chain, incorporating a nesting structure for rows instead of columns.  
However, unlike the cases of dominance and inclusion dominance, where we specify a LHS subset and try to find a RHS subset satisfying the corresponding criteria, in case of region dominance there are no choices to make, because the meaning of the rows is common between the two sides.  
Nevertheless, since this form of inclusion is reminiscent of  entanglement wedge nesting,  it seems a natural place to seek useful connections to properties of contraction maps.

For example, in recent work \cite{Czech:2026tgj}, it has been suggested that such combinatorial inclusion-like properties of HEIs are the principles that protect the holographic renormalization group. Claim (1.5) of \cite{Czech:2026tgj} follows as a corollary of our Theorem \ref{thm:if-balanced-hei-then-region-dominance}: since balanced HEIs obeys region dominance, every region $X_i$ on the LHS has a region $Y_j$ on the RHS that contains it.
There are many further directions to explore, both at the structural level, as well as at the level of physical implications.

\paragraph{Verification of PCM and BCM:} While verification of BCM can be performed via a direct enumeration of all $(0,1)$-combinations, it is not clear from the definition how to verify PCM. One possible approach can be found in \cite{JoeVerducci}. When restricted to $(0,1)$-matrices, otherwise complex majorization relations often reduce to elegant and explicit combinatorial conditions. For example, linear-combinations majorization (LCM) for $(0,1)$-matrices is equivalent to matrices being equal up to a permutation of columns \cite{DahlGutermanS}. PCM and BCM are clearly more complex relations even for $(0,1)$-matrices.\footnote{The method used to prove that BCM implies region dominance in Theorem \ref{thm:if-bcm-then-rd},  in fact, leads to a complete characterization of LCM for $(0,1)$-matrices \cite{DahlGutermanS}.} These questions will be addressed in future work, where we will present a new general algorithm for verifying PCM, as well as explore conditions specific for PCM and BCM of $(0,1)$-matrices, including more detailed connections to the dominance properties.

\paragraph{Hierarchy of cones for centered HEIs:}
Fixing a central region $i \in [\N+1]$ one can define in entropy space the $CH_i$: the cone of entropy vectors compatible with all HEIs centered on party $i$. It is clear that the $H\subset CH_i$, since all centered HEIs are redundant. In the dual entropy space one can also define the $CH^{*}_i$, the cone of all HEIs centered on party $i$. In this dual space, $CH^{*}_i \subset H^{*}$. Theorem \ref{thm:contraction-equals-inclusion-dom} gives a strong combinatorial constraint on the form of these inequalities through inclusion dominance. It would be interesting and useful to understand the extremal structure of the $CH_i$ (i.e.\ its extreme rays and facets).
While $CH_i$ is not permutation-invariant, we can restore permutation invariance by taking the union of all $CH_i$ over $i\in {[\N+1]}$; it would be likewise interesting to examine the resulting structure in more detail. 

More generally, each combinatorial property studied in this paper defines a set of linear inequalities on vectors in $\mathbf{R}^{2^\N-1}$, and therefore a cone in entropy space. Due to the implications among these properties, these cones are nested. It would be very interesting to understand this structure, as well as the classes of quantum states whose entropy vectors lie inside these different cones.

\subsubsection{Holographic entropy cone}

Notice that the three dominance conditions (inclusion dominance, dominance, and region dominance) were defined for all $(\leftm,\rightm)\in\pairset$, so any information quantity on $\N$-party system can be categorized according to these properties.  Moreover, while the two majorization conditions (PCM and BCM) are formulated for equal-sized matrices, we can extend this to any $(\leftm,\rightm)\in\pairset$ simply by padding the smaller matrix with 0's.  However, it bears emphasizing that the sequence of implications summarized in \cref{fig:logic-map} pertains to the centered inequalities, $(\leftm,\rightm) \in \cset$ (except for balance necessitating region dominance, where \cref{thm:if-balanced-hei-then-region-dominance} is formulated for the more general class $(\leftm,\rightm) \in \bset$). The natural next step is to examine the relations among the combinatorial criteria in the broader  $\sbset$ context, in order to elucidate the HEC itself.

\paragraph{Conjecture \ref{conj:4prime}:} As discussed in subsec.\ \ref{sssec:converses}, while Conjecture \ref{conj:4} is false, we have not found any counterexamples for the slightly weaker version, Conjecture \ref{conj:4prime}. It would be very interesting to either prove or disprove this conjecture, as it would establish to what extent the validity of a superbalanced inequality is controlled by that of its null reductions. The inclusion dominance criterion for an inequality to be an HEI could be useful for this purpose.

\paragraph{Tripartite form:}
In search for a more compact and revealing repackaging of HEIs, 
\cite{Hernandez-Cuenca:2023iqh} formulated the tripartite form (TF) of an information quantity as one consisting of a positive sum of negative tripartite and conditional tripartite informations, which the authors then used to show certain useful properties of HEIs, and to generate hundreds of new HEIs for $\N=6$.
In \cite{Grimaldi:2025jad}, it was conjectured explicitly that \emph{all} superbalanced HEIs can in fact be expressed in tripartite form, with strong supporting evidence to appear in \cite{HubenyLiuWIP}.  This would provide a convenient tool for further analysis of HEIs, since it makes many useful results manifest.
One may then ask whether the necessary and/or sufficient conditions for a superbalanced inequality to be an HEI that we have discovered in this paper shed further light on this conjecture.

A natural starting point is to ascertain which criteria are satisfied by a single building block of TF, namely  negative conditional $I_3$ (we exclude the unconditioned $-I_3$ terms, since only when the conditioned-on party is non-empty do we obtain a centered HEI).  
More specifically, consider a single term in a TF, say $-I_3(1:2:3\, |\,4)$, at $\N=4$.  
    In the matrix representation, the corresponding (false) inequality is given by
    \begin{equation}
        \leftm = \begin{pmatrix}
            1 & 1 & 1 & 1 \\
            0 & 1 & 1 & 0 \\
            0 & 1 & 0 & 1 \\
            0 & 0 & 1 & 1 \\
            0 & 0 & 0 & 0
            \end{pmatrix},\quad 
        \rightm = \begin{pmatrix}
            1 & 1 & 1 & 1 \\
            1 & 0 & 0 & 1 \\
            0 & 1 & 0 & 1 \\
            0 & 0 & 1 & 1\\
            0 & 0 & 0 & 0
    \end{pmatrix}.
        \end{equation}
It can be easily checked that this $(\leftm,\rightm)$ pair satisfies region dominance.  On the other hand, it already fails BCM (and thus also PCM, dominance, and inclusion dominance).
Failure of BCM (and PCM) can be seen for $v=(1,1,1,0)$, where $v\leftm = (0,2,2,2)\not\preceq(1,1,1,3)=v\rightm$, so that  $\leftm \not\preceq^{\rm BC} \rightm$.  Failure of dominance (and inclusion dominance) can be seen directly by taking $u=(0,1,1,1)$, since taking the last 3 columns on LHS has the number of occurrences of the 4+1 parties $\leftm u^T=(3,2,2,2,0)^T$, which cannot be dominated by any collection of 3 columns on the RHS. 

Since the conditions are additive, if a single term satisfies a given condition, the full TF expression would likewise satisfy it. However an important caveat here is that this applies to the context of centered HEIs, and for a TF expression to be centered, we need to condition each term on the same party --- which is then \emph{not} an HEI.  Correspondingly, null-reducing superbalanced HEIs does not maintain TF-compatibility.

\paragraph{Building up higher HEIs:}
Over the past few years, we have seen intriguing hints at structural relations across different $\N$, both at the level of primitive HEIs (i.e., HEC facets) \cite{Hernandez-Cuenca:2023iqh,Czech:2022fzb}, as well as at the at the level of HEC extreme rays \cite{Hernandez-Cuenca:2022pst,Hubeny:2025bjo}. In both representations, the expressions at smaller $\N$ appear as building blocks for the larger $\N$ ones, though these empirical observations have not been turned into a successful algorithm for deterministically generating the larger $\N$ HEIs (as opposed to merely providing potential candidate HEIs to be subsequently tested).
It is therefore natural to examine this structure through the lens of our combinatorial properties.
In the matrix language, we augment  the number of parties $\N$ by adding rows.  We already mentioned the trivial ``lifts'', corresponding to transformation 4 (duplicating rows) in \cref{sec:transformations}.  To do something non-trivial, namely create a new row which is not a copy of one of the existing ones, we must satisfy the column requirements, which entails modifying columns (and/or adding new ones)  correspondingly.  For example, if one appends a new party to a LHS term $X$, one needs to choose one term $Y\supset X$ on the RHS and append the new party to that term as well (otherwise region dominance would fail).  It would be interesting to see how restrictive the full collection of our criteria is for building up higher-$\N$ HEIs, and to actually implement it for generating new HEIs systematically.

\paragraph{Extreme rays and holographic graph models:}
To describe the min-cut cone $H_\N$ for arbitrary $\N$, it may in fact be easier to use the 
$V$-representation in terms of extreme rays rather than the $H$-representation in terms of the primitive HEIs.  This is due to the result \cite{Hernandez-Cuenca:2022pst}  that (subject to a certain conjecture\footnote{
    In particular, a strong form of the conjecture \cite[Conj.C3]{Hernandez-Cuenca:2022pst} is that the HEC extreme rays are representable by holographic graph models with tree topology.  This has been shown to be the case for all $\N=5$ ERs and vast majority of the known $\N=6$ ERs, and hitherto not falsified for the remaining ones.  (See also \cite{Hubeny:2025bjo} for further discussion.)
}) the extreme rays can be extracted from a more primal construct, namely the subadditivity cone (SAC), delimited by all instances of subadditivity \eqref{eq:SA}.  Its faces specify the ``pattern of marginal independence'' (PMI), characterizing the collection of decorrelated pairs of subsystems, which thereby distills the essence of the corresponding entanglement structure in a suggestively combinatorial language.  Recently, \cite{Hubeny:2024fjn} has developed a more compact reformulation of the PMI in terms of the so-called correlation hypergraph, which provides a new toolkit for exploring $H_\N$.  For example, given an entropy vector satisfying a certain condition,\footnote{
    Specifically, if the line graph of corresponding correlation hypergraph is chordal, the entropy vector can be realized by a simple tree graph model.  In particular, \cite{Hubeny:2025hst} has recently proved that the algorithm presented in \cite{Hubeny:2025bjo} always succeeds.
} we now have an efficient algorithm \cite{Hubeny:2025bjo}  to obtain the corresponding holographic graph model with tree topology.
Since the data describing the correlation hypergraph is captured by a bit string $\{0,1\}^{\D}$ of length $\D=2^\N-1$, and the graph model ``min-cut'' structure \cite{Hernandez-Cuenca:2022pst} can be expressed in terms of $(0,1)$ matrices, it is natural to ask if these can be characterized in terms of combinatorial properties analogous to those explored here.

For a general polyhedral cone, the passage between $H$-representation and $V$-representation is computationally hard \cite{Khachiyan}, but in the case of $H_\N$ which has physically natural interpretation, the rigidity of its structure might conceivably reduce the complexity significantly.\footnote{
In fact, this might be even more so in case of the SAC, which has the added advantage that we know its $H$-representation fully at all $\N$ (even though finding the $V$-representation directly is quite non-trivial \cite{He:2024xzq})  but since ``most'' of it is not holographically (or even quantumly) realizable, it is not entirely clear if this will ultimately provide further simplification beyond $H_\N$.}  One might then hope to make a more direct contact with the combinatorial properties of the HEIs.

Finally, as exemplified by the proof of \cref{thm:if-balanced-hei-then-region-dominance}, holographic graph models can provide a useful tool for identifying HEI candidates which are in fact invalid.  One may then ask if the present techniques allow one to generalize this construction to delimit the HEC more directly.

\paragraph{Does HRT obey RT HEIs?} The null reduction and  majorization test arose in an investigation of whether the (covariant) HRT formula obeys HEIs, specifically in the setting of so-called light-cone configurations of spatial regions in near-vacuum states \cite{Grimaldi:2025jad}. We now know, via Theorem \ref{thm:if-HEI-then-contraction}, Corollary \ref{cor:HEI-implies-dominance}, and Theorem \ref{thm:if-dominance-then-PCM}, that Conjecture \ref{conj:1} is true: all null reductions of all HEIs obey the majorization test. This provides strong evidence that indeed the HRT formula does obey all HEIs. Furthermore, the proof of Conjecture \ref{conj:1} reveals that HEIs posses an interesting universal combinatorial structure. This combinatorial structure could potentially lead to new insights on the broader question of whether the HRT formula obeys HEIs in general, beyond light-cone configurations.

\paragraph{Physical meaning of the HEC:} Returning to the original holographic context, the HEIs are constraints on the multipartite entanglement structure of semiclassical states in quantum gravity. Interpreting those constraints in a quantum information sense, for example by giving an ansatz for the state that automatically obeys them but is flexible enough to accommodate general semiclassical states, remains an open problem.\footnote{For the simplest superbalanced HEI, namely MMI \eqref{eq:MMI}, such an ansatz was given in terms of perfect tensor states and called ``bipartite dominance'' \cite{Cui:2018dyq}. Arguments have been advanced against this conjecture \cite{Akers:2019gcv}, but the issue remains open.} One can hope that the new combinatorial perspective on HEIs offered in this work could lead to progress on this crucial issue.\footnote{Very recent work \cite{Czech:2026tgj} has studied the physical implications of the majorization test of \cite{Grimaldi:2025jad}.
}


\acknowledgments{
We would like to thank Ning Bao, Bartek Czech, Christian Ferko, Sergio Hern\'andez-Cuenca, and Max Rota for useful conversations and comments on a draft of this paper. 
G.G. and M.H. were supported by the U.S. Department of Energy through award DE-SC0009986. V.H. was supported in part by the U.S. Department of Energy through award DE-SC0009999 and by funds from the University of California. 
G.G. would like to thank the Institut des Hautes Etudes Scientifiques for hospitality while this work was undertaken. 
}

\appendix

\section{More on contraction maps}
\label{sec:contraction}

In this appendix we review several interesting aspects of contraction maps.

\paragraph{Length of bit strings:}
There are several types of contraction maps which one may consider, depending on the specification of the domain. 
While we have specified an entropy inequality with unit coefficients but possibly repeating terms 
\begin{equation}\label{eq:ineq-form2}
    \sum_{n = 1}^{\mL} \ent{X_n} \geq \sum_{n = 1}^{\mR} \ent{Y_n}\,,
\end{equation}
the more traditional way to specify an HEI with as few entropy terms as possible allows for arbitrary positive integer coefficients
\begin{equation}\label{eq:ineq-form-orig}
    \sum_{n = 1}^{\mL'} \alpha_n \, \ent{X_n} \geq \sum_{\ell = 1}^{\mR'} \beta_n \, \ent{Y_n}\,,
\end{equation}
(with all $X_n$ and $Y_n$ distinct from each other),
where $\alpha_n, \beta_n \in \mathbb{Z}_+$ such that  
\begin{equation}\label{eq:reln_coeffs} 
    \sum_{n = 1}^{\mL'}\alpha_n = \mL \ , \qquad 
    \sum_{n = 1}^{\mR'}\beta_n = \mR \ ,
\end{equation}
so $\mL' \le \mL$ and $\mR' \le \mR$.
In both cases, these forms are specified uniquely, so that there is no ambiguity.  For the primitive HEIs, while most coefficients are 1, there is a small fraction with small integers larger than one.\footnote{
    For the known $\N=6$ primitive HEIs, about 9\% of terms come with coefficient 2, 0.3\% with coefficient 3, only two known HEIs (so 0.004\%) attain terms with coefficient 4, and none have yet been found with higher coefficients.
}
One natural distinction to make on the search for contraction maps is whether we consider the domain having bit strings $x$ with length $\mL$ or $\mL'$, i.e.,
\begin{enumerate}[nosep] 
\item[(1)] $x \in \{0,1\}^{\mL}$
\item[(2)] $x \in \{0,1\}^{\mL'}$
\end{enumerate} 
In case (2), we then need to incorporate the coefficients into the Hamming distance as originally defined in \cite{Bao:2015bfa}.
It is easy to see that if a contraction map exists for case (2), then it necessarily exists for case (1) simply by splitting the terms, but the converse does not hold (as we will see explicitly in a simple example below, and as was previously noted in \cite{Bao:2015bfa,Avis:2021xnz}).
A naively natural guess is that all \emph{primitive} HEIs admit the stronger form of contraction, namely case (2), but while this is works for most primitive HEIs, it is known \cite[eq.(4.19)]{Avis:2021xnz} that it does \emph{not} work for all primitive HEIs, already for $\N=5$.

Conversely, one might worry that even case (1) does not suffice to ascertain non-existence of a contraction map.  In other words, suppose we consider a candidate inequality in the form \eqref{eq:ineq-form2}, and suppose we can show that no contraction map of type (1) exists.  One might then still wonder if we multiply both sides by some positive integer $\ell$, so that instead of $\mL$ terms on LHS we have $\ell \, \mL$ terms, whether now a contraction map might exist.   However, this is not the case.\footnote{
    We thank Sergio Hern\'andez-Cuenca for helpful discussion. 
}

\paragraph{Relevant subset of bit strings:}
An independent a-priori distinction on contraction maps is the restriction to a subset of bit strings with a specified length.  Recall that in the holographic context the bit strings can be viewed as a specification of bulk regions: for each RT surface corresponding to one of the LHS subsystems, a given component of the bit string indicates whether the region in question lies inside (1) or outside (0) the corresponding RT surface.  This immediately implies that many bit strings are not realized by any region.  For example, suppose the LHS contains nested subsystems such as $1$ and $12$ and consider a bit string  $x$ for which the region lies inside RT surface for subsystem $1$ but simultaneously outside one for subsystem $12$.  Since by `entanglement wedge nesting' the RT surface (or more precisely the homology region) corresponding to subsystem $1$ must lie inside that for subsystem $12$,  such a bit string cannot be realized; it encodes an empty set.  Similarly, a bit string indicating a region which is inside the homology region of subsystem $1$ and simultaneously inside the homology region of a disjoint subsystem $2$, cannot be realized.\footnote{
    This follows from the previous observation: since $1$ is nested inside the complement of $2$, the region inside $1$ must be inside the complement of $2$ and therefore outside $2$. 
}

Combining these conditions, we see that from the full set of $2^{2^\N-1}$ bit strings, most\footnote{
    The number of realizable bit strings for the full collection of all possible subsystems of  an $\N$-party system is tabulated in \cite[eq.(2.13)]{Avis:2021xnz}.  Although it grows rapidly with $\N$:   $2, 4, 12, 81, 2646, 1422564, 229809982112, \ldots$, this growth far slower than the set of all possible bit strings, which would be 
    $2, 8, 128, 32768, 2147483648,\ldots$.
}
are not realizable.  However, not all subsystems can be invoked on the LHS.\footnote{
    For any HEI in tripartite form, the RHS has at least as many terms as the LHS, and additionally has to have the `largest' and `smallest' terms, which weakly bounds the number of LHS terms by $\mL'<2^{\N-1}-\N$.
}  One might hope that, as is the case for e.g.\ MMI, the LHS subsystems have a non-constraining crossing structure, in the sense that the LHS would not admit a nested or a disjoint pair of subsystems, in which case all possible bit strings would correspond to some potentially non-empty region.  However, this is not the case.
In fact, many HEIs in fact do have the property that not all LHS bit strings are realizable, despite the contraction map invoking them.\footnote{
    For the known primitive HEIs at $\N\le6$, this in fact happens for most HEIs. 
    Specifically, out of the 1877 primitive HEI orbits, there are 1530 containing nested LHS subsystems, 1438 containing disjoint LHS subsystems,   and 1156 containing both.  
} 
This means that many LHS bit strings in the full $\{0,1\}^{\mL'}$ domain are unrealizable, and therefore should be irrelevant for the contraction map.  Denoting the full domain for $m$-term LHS bit strings by $\mathcal{D} = \{0,1\}^m$ and the  realizable subset by $\mathcal{A} \subset \mathcal{D}$,  we therefore have a non-trivial distinction between two possibilities.  
\begin{enumerate}[nosep] 
\item[(a)] $x \in \mathcal{A} $
\item[(b)] $x \in \mathcal{D} $
\end{enumerate} 
Here too it is easy to see that if a contraction map exists for case (b), then it necessarily exists for case (a) simply by ignoring the unrealizable bit strings, but a-priori the converse does not need to hold. Nevertheless, it was shown in \cite{Avis:2021xnz} that if a contraction map exists for (a) then it does exist for (b), which constitutes an enormous reduction of the search space.

\paragraph{Summary of formulations:}
So we have in principle four distinct possibilities for the setup of contraction map:
(1a), (1b), (2a), and (2b).  The `easiest' setting for finding contraction maps is the case (1a) since there we have split the coefficients maximally while requiring the least number of conditions to check, whereas the original way of formulating the setup was for case (2b) where we use higher integer coefficients but consider the full set of LHS bit strings. 
Our definition \ref{def:contr_map} of contraction map used above corresponds to case (1b).
In practice, though, since we `typically' get away with not expanding the coefficients for the primitive HEIs, the `best bet' setting to seek a contraction map is case (2a), with the caveat that occasionally we might have to retreat to (1a).

\paragraph{Primitive versus redundant HEIs:}
Notice, however, that the previous comment on the `best bet' case referred to the context of \emph{primitive} HEIs, and so it need not hold for redundant HEIs.  Consider an HEI $\mathsf{Q}\ge 0$ which is the sum of two primitive HEIs, $\mathsf{Q}_1\ge 0$ and $\mathsf{Q}_2 \ge 0$.  If we write $\mathsf{Q}=\mathsf{Q}_1 + \mathsf{Q}_2$ in the form \eqref{eq:ineq-form2} and  if we have the contraction maps for both $\mathsf{Q}_1$ and $\mathsf{Q}_2$, then we can immediately uplift this to a contraction map for $\mathsf{Q}$ by simply `taking a product' of the two maps.\footnote{
    While this is not the only possible contraction map, it is one which is always guaranteed to work; we leave the proof of the validity as an exercise for the reader.}  More specifically, suppose we have contraction maps $f_1$ for $\mathsf{Q}_1$ and $f_2$ for $\mathsf{Q}_2$, such that for respective bit strings $x^{(1)}$ and $x^{(2)}$, we have $f_1(x^{(1)})=y^{(1)}$ and $f_2(x^{(2)})=y^{(2)}$.  Then for $\mathsf{Q}$, for a given bit string $x=(x^{(1)},x^{(2)})$, we define the contraction map $f$ by 
\begin{equation}\label{eq:product_contr}
    f(x)\coloneq (f_1(x^{(1)}),f_2(x^{(2)})) = (y^{(1)},y^{(2)}) \coloneq y \ .
\end{equation}
Similarly, if we write all HEIs in the form \eqref{eq:ineq-form-orig}, but none of the terms $\{X_i,Y_i\}$ in the two HEIs overlap, then the construction of \eqref{eq:product_contr} gives a valid contraction map $f$ (assuming the contraction maps $f_1$ and $f_2$ are known).
However, if some of the terms \emph{do} overlap, and according to \eqref{eq:ineq-form-orig} we combine them so as to avoid any repeated terms (so that $\mathsf{Q}$ has fewer terms than the sum of the number of terms of $\mathsf{Q}_1$ and $\mathsf{Q}_2$), then the existence of a contraction map is no longer guaranteed. 

Let us illustrate this on a simple example of a sum of two MMIs.
Let $\mathsf{Q}_1 = -I_3(1:2:3)$ and $\mathsf{Q}_2 = -I_3(1:2:4)$, so that $\mathsf{Q} = \mathsf{Q}_1 + \mathsf{Q}_2 \ge 0$ is expressed as
\begin{equation}\label{eq:MMIplusMMI}
       2  \ent{12} + \ent{23} + \ent{13}  + \ent{24} + \ent{14} \ge
    { 2 \ent{1} + 2 \ent{2} + \ent{3} + \ent{123}+ \ent{4} + \ent{124} } \ .
\end{equation}
In particular, we see there is one term on LHS and two terms on RHS with coefficient 2, 
where in the form \eqref{eq:ineq-form-orig} we have $\mL'=5$ with 
$X_n =\{  12,23,13,24,14\}$, and  $\mR'=6$ with 
$Y_n =\{ 1, 2,3,123,4,124\}$.  
Now consider the following four bit strings for the LHS, corresponding to the occurrence vectors of $1$ and $2$ in the full $\mathsf{Q}$ and ones originating from those in say $\mathsf{Q}_1$:
\begin{equation}\label{eq:1}
\begin{aligned}
    x_1 &= (1,0,1,0,1) \\
    x_2 &= (1,1,0,1,0) \\
    x_{11} &= (1,0,1,0,0) \\
    x_{12} &= (1,1,0,0,0) 
\end{aligned}
\end{equation}
(These are all allowed bit strings, in other words inside $\mathcal{A}$, so the following argument is relevant in both cases (2a) and (2b).)
Since $x_1$ and $x_2$ are occurrence vectors, we can immediately write their contraction map elements,
\begin{equation}\label{eq:2}
\begin{aligned}
    y_1 &= (1,0,0,1,0,1) \\
    y_2 &= (0,1,0,1,0,1) 
\end{aligned}
\end{equation}
Let us now try to find a contraction map elements $y_{11}$ and $y_{12}$.
The contraction condition tells us that the weighted Hamming distance has to be no larger for the RHS pair than for the LHS pair, where on the LHS the weights are $\{2,1,1,1,1\}$
whereas on the RHS they are $\{2,2,1,1,1,1\}$.
Since $\abs{x_1 - x_{11}} = 1$, the first two bits of $y_{11}$ must be $(1,0)$ (otherwise $\abs{y_1 - y_{11}} \ge 2$, violating the contraction condition).  
Similarly, since $\abs{x_2 - x_{12}} = 1$, the first two bits of $y_{12}$ must be $(0,1)$.  In other words,
\begin{equation}\label{eq:3}
\begin{aligned}
    y_{11} &= (1,0,\ldots) \\
    y_{12} &= (0,1,\ldots) 
\end{aligned}
\end{equation}
But this makes $\abs{y_{11}-y_{12}}\ge 4$, whereas $\abs{x_{11}-x_{12}}=2$, violating the contraction condition.
This demonstrates that there can be no contraction map in the formulation (2).
Further insight is gained by asking how does formulation (1) evade this non-existence, or rather, since its existence is guaranteed, why it cannot be compressed to achieve formulation (2).  If we split all terms in \eqref{eq:MMIplusMMI} so as to get the form \eqref{eq:ineq-form2} and apply the construction \eqref{eq:product_contr}, we find that in 10 cases where the two LHS $(12)$ columns have the same entry whereas the two $(1)$ columns and/or the two $(2)$ columns on the RHS have a mismatch.  Therefore we cannot compress these terms, so the formulation (1) cannot directly generate (2).

Based on this, one might be tempted to conjecture that non-primitive HEIs do not admit a contraction map in formulation (2).  However we already saw this is not the case if there is no overlap between the terms.  One could then try to weaken the premise by restricting to pairs of HEIs with some overlap.  Indeed, we often find that the overlap on the LHS has fewer terms than overlap on the RHS, just as in our prototypical case of \eqref{eq:MMIplusMMI} where it was 1 and 2, respectively.  However, this is not universally the case: for example for 
$\mathsf{Q}_1 = -I_3(12:3:4)$ and $\mathsf{Q}_2 = -I_3(1:23:5)$, $\mathsf{Q}$ has 1-term overlap ($123$) on LHS and no overlap on RHS. And in fact it \emph{does} admit a contraction map of type (2).
Given the large spectrum of possibilities of the type of overlap, one needs further restriction.  A natural one is the following:
\begin{customconj}{A1}
    If the LHS has fewer-term overlap than the RHS, then the HEI does not admit a contraction map of type (2).
\end{customconj}
While we leave more detailed exploration of these issues to future work, the basic idea would be to generalize the argument presented for the specific example in \eqref{eq:MMIplusMMI}, ensuring that the relevant bit strings are in $\mathcal{A}$.  
It would also be interesting to restrict the above discussion to the context of centered HEIs where we can use the power of the combinatorial properties discussed in the main text.

\bibliographystyle{jhep}
\bibliography{references}
\end{document}